%
\documentclass[11pt]{article}
\usepackage[margin=1in]{geometry}
\usepackage{mathpazo}
\usepackage{amsmath,amsthm,amssymb,xspace,verbatim,paralist}
\usepackage{float}
\usepackage{url,hyperref}
\usepackage{tabularx}
\usepackage{algorithm}
\usepackage[noend]{algpseudocode}
\makeatletter
\renewcommand{\ALG@beginalgorithmic}{\footnotesize}
\makeatother
\usepackage{caption}
\captionsetup[algorithm]{font=footnotesize}
\usepackage{cite}
\usepackage{cleveref}

\theoremstyle{plain}
\newtheorem{theorem}{Theorem}
\newtheorem{lemma}[theorem]{Lemma}
\newtheorem{obs}[theorem]{Observation}

\theoremstyle{definition}
\newtheorem{definition}[theorem]{Definition}


\addtolength{\abovedisplayskip}{-2pt}
\addtolength{\belowdisplayskip}{-2pt}

\title{Submodular Maximization Meets Streaming:\\ Matchings, Matroids, and More}
\author{%
  Amit Chakrabarti%
  \thanks{Department of Computer Science, Dartmouth College. Supported in part by NSF grant CCF-1217375.}
  \and
  Sagar Kale%
  $^\fnsymbol{footnote}$
}

\date{\{ac,\:sag\}@cs.dartmouth.edu}

\def\clap#1{\hbox to 0pt{\hss#1\hss}}

\def\mathrlap{\mathpalette\mathrlapinternal}

\def\mathrlapinternal#1#2{\rlap{$\mathsurround=0pt#1{#2}$}}

\newcommand{\argmax}{\operatornamewithlimits{argmax}}
\newcommand{\eps}{\varepsilon}
\newcommand{\e}{\mathrm{e}}

\newcommand{\RR}{\mathbb{R}}
\newcommand{\cA}{\mathcal{A}}

\newcommand{\cX}{\mathcal{X}}
\newcommand{\PP}{\mathsf{P}}
\newcommand{\NP}{\mathsf{NP}}

\renewcommand{\ge}{\geqslant}
\renewcommand{\le}{\leqslant}
\newcommand{\etal}{{\textit{et al.}}\xspace}

\newcommand{\ang}[1]{\langle{#1}\rangle}

\newcommand{\match}[1]{M^{#1}}
\newcommand{\trail}[1]{T^{#1}}
\newcommand{\killed}[1]{K^{#1}}
\newcommand{\ttt}{T}
\newcommand{\kkk}{K}
\newcommand{\both}[1]{B^{#1}}
\newcommand{\factor}[1]{C_{{#1}}}
\newcommand{\OPT}{M^*}
\newcommand{\intedges}{\Rsh}
\newcommand{\prevmatch}{P}
\newcommand{\spcl}{compliant\xspace}
\newcommand{\Spcl}{Compliant\xspace}
\newcommand{\shadow}[2]{\text{shadow-edge}(#1,#2)}

\newcommand{\IC}{\mathcal{I}}

\newcommand{\OPTI}{I^*}
\newcommand{\curv}{\operatorname{curv}}

\begin{document}

\maketitle

\thispagestyle{empty}
\begin{abstract}

We study the problem of finding a maximum matching in a graph given by an
input stream listing its edges in some arbitrary order, where the quantity to
be maximized is given by a monotone submodular function on subsets of edges.
This problem, which we call maximum submodular-function matching (MSM), is a
natural generalization of maximum weight matching (MWM), which is in turn a
generalization of maximum cardinality matching (MCM).
We give two incomparable algorithms for this problem with space usage falling
in the semi-streaming range---they store only $O(n)$ edges, using $O(n\log n)$
working memory---that achieve approximation ratios of $7.75$ in a single pass
and $(3+\eps)$ in $O(\eps^{-3})$ passes respectively. The operations of these
algorithms mimic those of Zelke's and McGregor's respective algorithms for
MWM; the novelty lies in the analysis for the MSM setting. In fact we identify
a general framework for MWM algorithms that allows this kind of adaptation to
the broader setting of MSM.

In the sequel, we give generalizations of these results where the maximization
is over ``independent sets'' in a very general sense. This generalization 
captures hypermatchings in hypergraphs as well as independence in the 
intersection of multiple matroids.


\end{abstract}

\newpage
\addtocounter{page}{-1}

\section{Introduction}\label{sec:intro}

Maximum cardinality matchings and maximum weight matchings are basic concepts
in graph theory and efficient algorithms for computing these structures are
fundamental algorithmic results with myriad applications. The
explosion of data---in particular graph data---over the past decade has
motivated a number of researchers to revisit several algorithmic problems on
graphs with a view towards designing {\em space efficient} algorithms that
process their inputs in {\em streaming fashion}, i.e., via sequential
access alone, though perhaps in multiple passes. In particular, a series of
recent works~\cite{fgnbm,mwms,zelke,elms,gkk,Kapralov13} have studied the
maximum cardinality matching (MCM) problem and its natural generalization, the
maximum weight matching (MWM) problem, on graph streams.

In this work, we study a further generalization of MWM that we call the {\em
maximum submodular-function matching} problem or, more briefly, the {\em
maximum submodular matching} (MSM) problem. This specific problem does not
seem to have been studied in previous work, though there has been plenty of
work in the optimization community on general (non-streaming) algorithms for
constrained submodular function maximization under constraints more general
than matchings (see, e.g., Feldman \etal \cite{feld} and the references
therein, as well as our own discussion in \Cref{sec:related}).  Our work gives
the first results for the MSM problem in the data stream model. Our
techniques in fact lead to results for a wider class of problems,
including submodular maximization on hypermatchings and intersection of
matroids.

Our study of MSM is inspired in part by its applicability to the Word
Alignment Problem (WAP) from computational linguistics, as studied in
Lin and Bilmes~\cite{lin}: we are given a source-language string and a
target-language string and the goal is to find a ``good'' mapping
between their respective words.  Lin and Bilmes~\cite{lin} cast WAP as
maximizing a suitable submodular function constrained to the intersection
of two partition matroids (this is in turn closely related to bipartite
matchings), and they note the improvement this gives over previous
approaches that cast WAP as an instance of MWM.

\medskip

A {\em submodular function} on a ground set $\cX$ is defined to be a function
$f:2^{\cX}\to\RR$ that satisfies
$f(A\cup B)+f(A\cap B) \le f(A)+f(B)$
for all $A,B\subseteq \cX$. For our purposes in this work, we will instead use
the following ``diminishing returns'' characterization, which is well known to
be equivalent to the definition just given: for all $Y\subseteq X\subset \cX$ 
and $x \in \cX\setminus X$, we have
\begin{equation} \label{eq:subm2}
  f(X \cup \{x\}) - f(X) \le f(Y \cup \{x\}) - f(Y) \, .
\end{equation}
The function $f$ is said to be {\em monotone} if $f(Y) \le f(X)$ whenever $Y
\subseteq X \subseteq \cX$ and {\em proper} if $f(\emptyset) = 0$.
An instance of MSM consists of a graph $G = (V,E)$ on vertex set $V = [n] :=
\{1,2,\ldots,n\}$ and a non-negative monotone proper submodular function $f$
whose ground set is the edge set $E$, i.e., $f:2^E\to\RR_+$. The goal is to
output a matching $M^* \subseteq E$ that maximizes $f(M^*)$; we shall refer to
such a matching as an $f$-MSM of $G$. For a real number $\alpha\ge 1$, an
$\alpha$-approximate $f$-MSM of $G$ is defined to be a matching $M \subseteq
E$ such that $f(M) \ge \alpha^{-1} f(M^*)$.

A non-negative weight function $w:E\to\RR_+$ can be naturally extended to
subsets of $E$ via $w(S) = \sum_{e\in S} w(e)$ for all $S \subseteq E$; the
latter function $w$ is easily seen to be non-negative, monotone, proper, and
submodular (it is in fact {\em modular}, a.k.a. linear). Therefore MSM
generalizes the more famous MWM problem. Letting $w$ be a constant function
gives us the even more special MCM problem. However, an important threshold is
crossed in generalizing from MWM to MSM. The MWM problem is solvable in
polynomial time~\cite{Edmonds1965a,Galil86-survey}---a monumental algorithmic
triumph of the 20th century---whereas MSM hits an $\Omega(1)$
approximation threshold if it is to be solved in polynomial time;
see \Cref{thm:apx} for a formal treatment.

Our concern in this paper is with graph {\em streams}: the input graph is
described by a stream of edges $\{u,v\}$, with $u,v\in [n]$. We assume that the
number of vertices, $n$, is known in advance and that each edge in $E$ appears
exactly once in the input stream. The order of edge arrivals is arbitrary and
possibly adversarial. We seek algorithms for MSM that use only quasi-linear
working memory---i.e., $O(n(\log n)^{O(1)})$ bits of storage, with $O(n\log
n)$ being the holy grail---and process each edge arrival very quickly, ideally
in $O(1)$ time. Algorithms with such guarantees have come to be known as
{\em semi-streaming} algorithms~\cite{fgnbm}. Notice that $\Omega(n\log n)$
bits are necessary simply to store a matching that saturates $\Omega(n)$
vertices.

As with all optimization problems involving submodular functions, a study of
MSM requires special care because the description of the submodular function
$f$ needs $\Omega(2^{|E|})$ space in general. Special cases do allow $f$ to be
more compactly represented: such as MWM, where each edge in the stream arrives
together with its weight. Since our goal is to give general algorithms,
assuming no further structure for $f$, we will instead take the common
approach of having $f$ specified by a {\em value oracle} that returns $f(S)$
when presented with $S \subseteq E$. See \Cref{sec:prelim} for a precise
explanation.

\subsection{Our Results} \label{sec:results}

We give two incomparable approximation algorithms for the MSM problem on graph
streams, formally stated in the two theorems below. Both algorithms are
semi-streaming: specifically, each stores only $O(n)$ edges, thereby
using $O(n\log n)$ working memory.%
\footnote{Throughout the paper, we adopt the convention that 
edge weights in an MWM instance---and analogously, 
$f$-values of singletons in an MSM instance---do not
grow with $n$; this ensures that each weight we store in our
algorithms takes up $O(1)$ storage.}
For brevity, ``submodular $f$,'' means a non-negative monotone 
proper submodular function $f$, presented by a value oracle.

\begin{theorem} \label{thm:onepass}
  For every submodular $f$, there is a one-pass semi-streaming algorithm that
  outputs a $7.75$-approximate $f$-MSM of an $n$-vertex input graph, storing
  at most $O(n)$ edges at all times.
\end{theorem}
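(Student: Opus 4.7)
The plan is to adapt Zelke's one-pass semi-streaming algorithm for MWM by replacing every comparison of edge weights with a comparison of marginal $f$-values. On arrival of an edge $e$, let $C(e) \subseteq M$ denote the (at most two) current matching edges that conflict with $e$. The algorithm admits $e$ whenever the marginal gain $f(M \cup \{e\}) - f(M)$ exceeds $(1+\beta)$ times the drop $f(M) - f(M \setminus C(e))$ that would be incurred by evicting $C(e)$, for a carefully tuned constant $\beta$. When $e$ is admitted, the evicted edges in $C(e)$ are not discarded but retained as \emph{shadow edges} attached to $e$, exactly as in Zelke's scheme; a shadow edge may be promoted back into the matching if the edge that displaced it is itself later evicted and the shadow no longer conflicts with the current matching. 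Since each matching slot carries $O(1)$ associated shadow edges, the total storage stays $O(n)$.

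For the approximation analysis, I would fix an optimum matching $\OPT$ and express $f(\OPT)$ telescopically as $\sum_{o \in \OPT} \bigl[f(\OPT_{\le o}) - f(\OPT_{<o})\bigr]$ under some linear order on $\OPT$. Each $o \in \OPT$ is then charged to the final matching $M$ via a \emph{kill chain}: if $o$ was rejected on arrival, it is charged to the edges of $C(o)$ at that moment; if $o$ was admitted but later evicted by some $e'$, then $o$ is first charged to $e'$ and we follow the eviction chain until we terminate at an edge that survives in $M$ (or at a shadow edge that is ultimately promoted). The admission rule guarantees a geometric $(1+\beta)$ factor of marginal gain at each chain step, so the total marginal value charged to each final matching slot is bounded by a convergent geometric series.

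The crucial use of submodularity occurs in the comparison between the ``charge'' extracted by $o$ at the moment of its rejection/kill and its true contribution $f(\OPT_{\le o}) - f(\OPT_{<o})$ to the optimum. Since the matching $M$ at that moment is, in general, neither a subset nor a superset of $\OPT_{<o}$, one cannot directly invoke diminishing returns. The standard workaround is to bound $f(\OPT_{\le o}) - f(\OPT_{<o}) \le f(M \cup \{o\}) - f(M)$ via a careful pairing: split $\OPT_{<o}$ into edges that share a vertex with $M$ and those that do not, and apply \eqref{eq:subm2} to the disjoint part using monotonicity on the rest. This step replaces the trivial weight inequality $w(o) \le w(o)$ that suffices in the MWM analysis, and is the main conceptual novelty the theorem requires.

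The hard part will be bookkeeping the shadow-edge mechanism inside this submodular charging scheme. In the MWM analysis, the shadow edges give a clean geometric recovery of lost weight; here, the ``weight'' of a shadow edge must be re-interpreted as a marginal value whose reference set $M$ changes over time, so the naive substitution can lose an extra factor. I would handle this by defining, for each matching slot, a potential that aggregates the current edge's marginal with discounted marginals of its shadow edges, proving monotonicity of the potential across every arrival event through a small case analysis (accept, reject, promote-shadow), and finally optimizing $\beta$ to equalize the competing geometric losses; the balance point should land precisely at the $7.75$ ratio claimed.
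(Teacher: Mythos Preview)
Your proposal has two genuine gaps that prevent it from going through as stated.

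First, the algorithm you describe is not Zelke's: you compare the marginal of the arriving edge $e$ only against the conflicting edges $C(e)$, which is the simpler Feigenbaum--McGregor rule (and yields only $\approx 5.83$ for MWM, hence certainly not $7.75$ after the submodular penalty). Zelke's rule examines a seven-edge neighbourhood of $e$ (including shadow edges and \emph{their} conflicting matching edges) and selects the best augmenting set $A$ within it. More importantly, both your admission test and your kill-chain bookkeeping use the \emph{current} marginals $f(M+e)-f(M)$ and $f(M)-f(M\setminus C(e))$, which change every time $M$ changes. The geometric kill-chain argument needs a fixed, time-invariant ``weight'' for each edge so that the $(1+\beta)$ factors multiply cleanly along the chain; with drifting marginals the potential you sketch has no reason to be monotone.

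Second, and more fundamentally, the submodularity step you flag as ``crucial'' does not work. You want $f(\OPT_{\le o})-f(\OPT_{<o}) \le f(M\cup\{o\})-f(M)$, but since $M\not\subseteq\OPT_{<o}$ this is simply false in general, and the splitting workaround you describe does not rescue it (edges of $M$ disjoint from $\OPT_{<o}$ can still suppress the marginal of $o$ arbitrarily). The paper's fix is structurally different: it assigns each edge a \emph{frozen} weight $w(e):=f(I_e\cup S_e+e)-f(I_e\cup S_e)$ at the moment $e$ is processed (note the shadow set $S_e$ is included), and then runs Zelke's algorithm verbatim on these weights. Zelke's own analysis, used as a black box, gives $w(\OPT)\le C_\gamma\,w(I)$. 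The submodularity argument is then run against the set $I\cup K$ of \emph{all} edges ever admitted: since $I_e\cup S_e\subseteq I\cup K$ for every $e$, diminishing returns gives $f(I\cup K\cup\OPT)-f(I\cup K)\le w(\OPT)$ cleanly, while $f(I\cup K)\le w(I\cup K)\le (1+1/\gamma)w(I)\le (1+1/\gamma)f(I)$. This costs an additive $1+1/\gamma$ on top of Zelke's $C_\gamma$, and minimising $C_\gamma+1+1/\gamma$ over $\gamma$ gives $7.75$ at $\gamma=1$. The key idea you are missing is to freeze weights and compare against $I\cup K$ rather than against $\OPT_{<o}$.
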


\begin{theorem} \label{thm:multipass}
  For every submodular $f$, and every constant $\eps > 0$, there is a
  multi-pass semi-streaming algorithm that makes $O(\eps^{-3})$ passes over an
  $n$-vertex graph stream and outputs a $(3+\eps)$-approximate $f$-MSM of the
  graph.  This algorithm stores only a matching in the input graph at all
  times; in particular it stores only $O(n)$ edges.
\end{theorem}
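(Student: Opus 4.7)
The plan is to adapt McGregor's multi-pass MWM algorithm by replacing its weight-based swap test with a marginal-value test appropriate for a submodular $f$. The algorithm maintains a single matching $M$, initialized either to $\emptyset$ or to the output of a one-pass greedy bootstrap (which guarantees $f(M) = \Omega(1)\cdot f(\OPT)$ and thereby bounds the dynamic range of $f(M)$ going forward). Fix a threshold $\gamma = \Theta(\eps)$. In each subsequent pass, for each streamed edge $e$, let $C(e)\subseteq M$ be the (at most two) edges of $M$ sharing an endpoint with $e$, and perform the swap $M \leftarrow (M\setminus C(e))\cup\{e\}$ whenever
\[
f((M\setminus C(e))\cup\{e\}) - f(M\setminus C(e)) \;\ge\; (1+\gamma)\bigl(f(M) - f(M\setminus C(e))\bigr),
\]
the natural submodular analogue of McGregor's rule ``$w(e)\ge(1+\gamma)w(C(e))$''. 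Since $M$ is a matching throughout and each test uses $O(1)$ value-oracle calls, the semi-streaming bound follows.

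For the approximation guarantee, suppose a pass completes without triggering any swap, so for every $e^*\in \OPT$ we have $f((M\setminus C(e^*))\cup\{e^*\}) - f(M\setminus C(e^*)) < (1+\gamma)(f(M) - f(M\setminus C(e^*)))$. By diminishing returns applied to $M\setminus C(e^*)\subseteq M$, the left-hand side upper-bounds $f(M\cup\{e^*\})-f(M)$. Summing over $e^*\in\OPT$ and using the standard submodular inequality $f(M\cup \OPT)-f(M) \le \sum_{e^*\in\OPT}(f(M\cup\{e^*\}) - f(M))$ yields
\[
f(\OPT) - f(M) \;\le\; (1+\gamma)\sum_{e^*\in\OPT}\bigl(f(M) - f(M\setminus C(e^*))\bigr).
\]
Each edge of $M$ appears in at most two of the sets $C(e^*)$ (one per endpoint); applying diminishing returns once more gives $f(M)-f(M\setminus C(e^*)) \le \sum_{e\in C(e^*)}(f(M)-f(M\setminus\{e\}))$; and, for any monotone submodular $f$, telescoping the nested marginals of $M$ together with diminishing returns gives $\sum_{e\in M}(f(M)-f(M\setminus\{e\})) \le f(M)$. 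Chaining these, the right-hand side is at most $2(1+\gamma)\,f(M)$, so $f(\OPT) \le (3+2\gamma)f(M)$. Choosing $\gamma=\eps/2$ gives the claimed $(3+\eps)$-factor.

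The main difficulty is bounding the pass count by $O(\eps^{-3})$. A single swap only guarantees an \emph{additive} gain of $\gamma\bigl(f(M)-f(M\setminus C(e))\bigr)$ in $f(M)$, which can be arbitrarily small compared to $f(M)$, so one cannot argue multiplicative progress per swap. The plan is to aggregate progress across a whole pass: the sum of marginal losses $f(M)-f(M\setminus C(e))$ incurred by the swaps of one pass can be charged, via the two-per-edge incidence bound and the inequality $\sum_{e\in M}(f(M)-f(M\setminus e))\le f(M)$, to a global potential bounded by $f(\OPT)$. Partitioning execution into $O(\eps^{-1})$ phases indexed by the current $f(M)$-value (bucketed at multiplicative resolution $1+\gamma$) and combining with the bootstrap, one can show that within each phase the aggregate per-pass gain is $\Omega(\gamma^2)$ times the remaining budget, so each phase uses $O(\eps^{-2})$ passes before either termination or phase transition; summing over $O(\eps^{-1})$ phases yields the $O(\eps^{-3})$ bound. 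The delicate point is that one pass may perform many swaps whose effects on $f$ interact non-monotonically, so the amortization must track the aggregate pass-gain rather than any single swap, and must account for the fact that the sets $C(e)$ shift as $M$ evolves mid-pass.
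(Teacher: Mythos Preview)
Your approximation argument contains a wrong-direction inequality. You claim that diminishing returns gives
\[
f(M)-f(M\setminus C(e^*)) \;\le\; \sum_{e\in C(e^*)}\bigl(f(M)-f(M\setminus\{e\})\bigr),
\]
but for submodular $f$ the opposite holds. With $C(e^*)=\{a,b\}$ one has
\[
f(M)-f(M\setminus\{a,b\})
= \bigl(f(M)-f(M\setminus\{a\})\bigr)
+ \bigl(f(M\setminus\{a\})-f(M\setminus\{a,b\})\bigr),
\]
and the second summand is the marginal of $b$ at the \emph{smaller} set $M\setminus\{a,b\}$, hence at least $f(M)-f(M\setminus\{b\})$. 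A coverage example makes this vivid: if $a$ and $b$ each cover the same single element, then $f(M)-f(M\setminus\{a,b\})=1$ while each $f(M)-f(M\setminus\{e\})=0$. Since the next step, $\sum_{e\in M}(f(M)-f(M\setminus\{e\}))\le f(M)$, is genuinely the submodular direction, there is no way to chain the two; the quantity $\sum_{e^*}(f(M)-f(M\setminus C(e^*)))$ is simply not controlled by $f(M)$ via ``last-marginal'' bounds. This is exactly why the paper does \emph{not} work with deletion marginals: it assigns each edge a weight $w(e)=f(I+e)-f(I)$ at the moment $e$ is processed (an insertion marginal relative to a set that only grows along the killing tree), proves $w(I)\le f(I)$ and $f(I\cup K)\le w(I)+w(K)$ by telescoping in stream order, and then reuses McGregor's weight-based charging (\Cref{lem:charging}) together with the bound $w(K)\le w(I)/\gamma$. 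The extra $1+1/\gamma$ this costs is precisely what turns the $2$ into a $3$.

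Your pass bound is also not a proof as written. With termination ``no swap in a pass,'' a single swap can increase $f(M)$ by an amount that is an arbitrarily small fraction of $f(M)$, and your phase-bucketing sketch does not establish the claimed $\Omega(\gamma^2)$ aggregate gain; you would need a concrete potential argument, and the mid-pass drift of $M$ that you flag is exactly what makes one hard to produce with deletion marginals. The paper sidesteps this by stopping when $w_i(M^i)/f(M^{i-1})\le 1+\kappa$ for a carefully chosen $\kappa=\Theta(\gamma^3)$; because $f(M^i)\ge w_i(M^i)$, every non-terminal pass multiplies $f(M)$ by at least $1+\kappa$, and the first-pass $8$-approximation bounds the total number of doublings, giving $O(\kappa^{-1})=O(\eps^{-3})$ passes directly.
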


Perhaps more important than these specific approximation ratios is the
technique behind these results.  We identify a general framework for matching
algorithms in graph streams.  We show that whenever an MWM algorithm fits this
framework, it can be adapted to the broader setting of MSM. The above theorems
then follow by revisiting two recent MWM algorithms---that of
Zelke~\cite{zelke} for \Cref{thm:onepass} and that of McGregor~\cite{mwms} for
\Cref{thm:multipass}---and showing that they fit our framework.  Thus, the
main contributions of our work are (1) the identification of the framework and
(2) the novel analysis for the MSM setting. 

Naturally, MWM algorithms base their actions on edge weights. A trivial way of
giving a weight to an edge $e$ for the $f$-MSM problem is to use the quantity
$f(\{e\})$. As may be expected, this is too na\"{i}ve to be useful. Our first
insight is that weights can be assigned to edges as they are encountered in
the stream based on how much they improve the ``current matching.'' Our second
insight, specific to multi-pass algorithms, is that edge weights assigned this
way can be calculated on each pass. Though the resulting weights may change
from one pass to another, nevertheless, our framework and analysis technique
allow us to recover a good approximation ratio.

Our framework is not deeply wedded to matchings: it is general enough to
capture set maximization problems constrained to abstract ``independent
sets,'' for a very general notion of independence. Taking this view, we obtain
two more families of results. The first applies to {\em hypergraphs}, where we
obtain approximate MWM and $f$-MSM algorithms for matchings (a.k.a.
hypermatchings) given a bound $p$ on the size of hyperedges. The second
applies to maximization over the intersection of $p$ matroids (for a
constant $p$): the case $p = 2$ captures matchings in bipartite graphs. In
each family, we have a maximum-submodular problem (MSIS, say---the ``IS''
stands for ``independent set''), and a maximum-weight problem (MWIS, say)
where the submodular function is modular. The results are summarized below;
details appear in~\Cref{sec:hypg,sec:matroid}.

\begin{theorem} \label{thm:mwis}
  For every submodular $f$, the MWIS and $f$-MSIS problems, with independent
  sets being given either by a hypermatching constraint in $p$-hypergraphs or
  by the intersection of $p$ matroids, there are near-linear-space
  streaming algorithms giving the following approximation ratios.

  {\renewcommand{\arraystretch}{1.3}
  \begin{center}\begin{tabular}{l|c|c}
    Problem type & MWIS & MSIS \\
    \hline
    One pass:\, $p$-hypergraphs; $p$ matroids & $2(p+\sqrt{p(p-1)})-1$ & $4p$ \\
    $O(\eps^{-3} \log p)$ passes:\, $p$-hypergraphs; $p$ partition matroids & $p+\eps$ & $p+1+\eps$ \\
  \end{tabular}\end{center}}
\end{theorem}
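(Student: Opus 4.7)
The plan is to reuse the framework developed for Theorems 1 and 2 but instantiated on a more abstract notion of ``independent set.'' The two target settings---hypermatchings in $p$-uniform hypergraphs and intersections of $p$ matroids---share a bounded-exchange property that is exactly what the matching analyses rely on at $p=2$: for any independent set $I$ and any element $e \notin I$, there is a set $C(e) \subseteq I$ of size at most $p$ such that $(I \setminus C(e)) \cup \{e\}$ is independent. For hypermatchings $C(e)$ is simply the (at most $p$) hyperedges incident to a vertex of $e$; for matroid intersections it follows by applying the matroid exchange axiom once per matroid. I would abstract out this property and re-express the algorithms of Theorems 1 and 2 in its language, replacing ``an edge conflicts with at most two stored edges'' by ``$e$ conflicts with at most $p$ stored elements.''

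For the one-pass row, I would adapt Zelke's swap/shadow mechanism. When element $e$ arrives, compute the charged cost of its conflicts $C(e)$ in the current solution $I$---using weight $w(C(e))$ for MWIS and marginal $f(I)-f(I\setminus C(e))$ for MSIS---and accept $e$ (demoting $C(e)$ to shadow elements that may later be reactivated) if its own measure exceeds $(1+\gamma)$ times this cost. The potential argument used to bound Zelke's $7.75$ goes through verbatim except that each OPT element charges conflicts to up to $p$ elements of the solution rather than $2$. Optimizing $\gamma$ in the two resulting recurrences yields the stated ratios $2(p+\sqrt{p(p-1)})-1$ for MWIS and $4p$ for MSIS.

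For the multi-pass row, I would follow McGregor's approach: in each pass scan the stream and, using the current independent set $I$, look for a short ``swap'' $(e, C(e))$ that improves the objective by a factor $(1+\eps/p)$ on the scale of a geometric weight (or marginal) bucket. Partition matroids and hypermatchings both admit such local detection in a single pass, which is why the theorem statement restricts the multi-pass setting to partition matroids. After $O(\eps^{-3}\log p)$ passes the solution is locally optimal, and a standard charging argument---each OPT element ``blames'' the up-to-$p$ solution elements it conflicts with---gives a $p+\eps$ bound for MWIS and $p+1+\eps$ for MSIS.

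The main obstacle is the MSIS analysis, because submodularity decouples the local swap test from the global objective. In the modular case, swapping $e$ for $C(e)$ changes $w(I)$ by exactly $w(e)-w(C(e))$, so local optimality directly controls the gap to OPT. In the submodular case the gain from an OPT element $e^*\notin I$ is only $f(I\cup\{e^*\})-f(I)$, which diminishing returns bounds by $f(\OPTI)-f(\OPTI\setminus\{e^*\})$; recovering a clean charging requires summing diminishing-returns inequalities around each exchange and accounts for the extra additive ``$+1$'' in the submodular bound. Making this accounting work simultaneously with the $p$-fold conflict degree, and checking that the insights about per-pass reweighting flagged in \Cref{sec:results} survive the move from matchings to matroid intersections, is where the real work lies.
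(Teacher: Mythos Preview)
Your high-level plan---abstract a bounded-exchange property and rerun the matching framework with $p$ in place of $2$---is the right shape, but two concrete choices diverge from the paper in ways that matter.

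First, the one-pass row is \emph{not} obtained by generalizing Zelke's shadow mechanism. Note that $2(p+\sqrt{p(p-1)})-1$ specializes at $p=2$ to McGregor's $3+2\sqrt2\approx 5.828$, not Zelke's $5.585$; the paper in fact generalizes McGregor's simpler shadow-free algorithm to $p$-hypergraphs and, for the matroid case, uses Ashwinkumar's existing one-pass algorithm (which is already \spcl). Zelke's seven-edge neighborhood and shadow bookkeeping are tightly tied to graph structure, and the paper does not attempt to port them. The technical work for hypergraphs and partition matroids is instead a generalized \emph{charging scheme} showing
\[
  w(\OPTI)\le (1+\gamma)\sum_{e\in I}\bigl((p-1)\,w(\ttt(e))+p\,w(e)\bigr),
\]
which is the $p$-analogue of \Cref{lem:charging}; everything else in the one-pass and multi-pass analyses is then reused verbatim with this inequality substituted in.

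Second, your proposed MSIS acceptance test---compare $e$'s value to the dynamic marginal $f(I)-f(I\setminus C(e))$---is not what the paper does and would not plug into its analysis. The paper's $f$-extension assigns each arriving element a \emph{frozen} weight $w(e)=f(I\cup S+e)-f(I\cup S)$ at the moment of arrival and thereafter runs the \emph{MWIS} algorithm verbatim on those weights; the MSIS guarantee then comes for free from \Cref{lem:generic} (yielding $\factor{\gamma}+1+1/\gamma$, hence $4p$ at $\gamma=1$). Your test recomputes conflict costs via $f$ on the fly, which breaks the telescoping in \Cref{lem:wmbg,lem:obg}. Likewise, the multi-pass algorithm is not a geometric-bucket local search; it is literally \Cref{alg:multipass} with the generalized charging inequality above and parameters $\gamma=\eps/(p+1)$ and an appropriately chosen $\kappa$, so the ``$+1$'' in $p+1+\eps$ falls out of the same $1+1/\gamma$ overhead already present in \Cref{lem:generic}, not from a separate diminishing-returns accounting.
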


The one-pass MWIS result for matroids was already known from the work of
Ashwinkumar~\cite{ashwin}; the remaining results in~\Cref{thm:mwis} are novel.

Our results for $f$-MSM and $f$-MSIS as stated above involve approximation
ratios that are always worse than the corresponding ones for MWM and MWIS:
therefore they are not generalizations in the strictest sense. However, such a
generalization is possible by considering
the {\em curvature} $\curv(f)$ of the submodular function $f:2^E\to\RR$. This
quantity, defined by
\begin{equation} \label{eq:curv}
  \curv(f) := \min\{c:\, \forall\,A\subseteq E~ \forall\,e\in E\setminus A,
  ~\text{we have}~ f(A\cup\{e\}) - f(A) \ge (1-c) f(\{e\})\} \, ,
\end{equation}
measures how far $f$ is from being modular. Note that $\curv(f) \in [0,1]$ and
$\curv(f) = 0$ iff $f$ is modular. For our final result, we give approximation
ratios for $f$-MSM and $f$-MSIS that gradually improve to those for Zelke's MWM
algorithm and Ashwinkumar's MWIS algorithm as $\curv(f) \to 0$.

\begin{theorem} \label{thm:curvature}
  For every submodular $f$, the approximation ratios for $f$-MSM
  in~\Cref{thm:onepass} and the one-pass approximation ratio for $f$-MSIS
  in~\Cref{thm:mwis} can be improved to $\min\{7.75,\, 5.585/(1-\curv(f))\}$ and
  $\min\{4p,\, (2(p+\sqrt{p(p-1)})-1)/(1-\curv(f))\}$ respectively.
\end{theorem}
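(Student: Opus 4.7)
The plan is to establish the improved ratios by running two semi-streaming procedures in parallel and returning whichever output has larger $f$-value. The first procedure is simply the $f$-MSM algorithm of \Cref{thm:onepass} (respectively, the one-pass $f$-MSIS algorithm of \Cref{thm:mwis}), which guarantees the unconditional ratios of $7.75$ (respectively $4p$). The second procedure is Zelke's MWM algorithm (respectively Ashwinkumar's MWIS algorithm) run on the \emph{surrogate} edge weights $w(e) := f(\{e\})$; each such singleton value is obtained from a single oracle query when the edge arrives, and the underlying algorithms use $O(n\log n)$ memory, so the combined scheme remains semi-streaming. Denote the approximation ratio of the underlying MWM/MWIS algorithm by $\rho_w \in \{5.585,\, 2(p+\sqrt{p(p-1)})-1\}$.

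I would first show that the $w$-optimum dominates the $f$-optimum. Let $S^*$ be an optimal $f$-MSM or $f$-MSIS. Listing its elements in arbitrary order and telescoping $(\ref{eq:subm2})$ from $\emptyset$ upwards, submodularity together with $f(\emptyset)=0$ gives $f(S^*) \le \sum_{e \in S^*} f(\{e\}) = w(S^*)$. Since $S^*$ is a feasible matching or independent set, the optimum of the surrogate MWM/MWIS instance is at least $w(S^*) \ge f(S^*)$, so the output $M$ of the second procedure satisfies $w(M) \ge f(S^*)/\rho_w$.

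Next I would use the curvature condition $(\ref{eq:curv})$ to convert a lower bound on $w(M)$ into one on $f(M)$. Order $M = \{e_1, \ldots, e_k\}$ arbitrarily and set $M_i := \{e_1, \ldots, e_i\}$. The definition of $\curv(f)$ yields
\[
  f(M_i) - f(M_{i-1}) \;\ge\; (1-\curv(f))\, f(\{e_i\})
\]
for each $i$. Summing these telescoping increments and using $f(\emptyset)=0$ gives $f(M) \ge (1-\curv(f))\, w(M)$. Chaining with the previous inequality shows that the second procedure attains $f$-approximation ratio at most $\rho_w/(1-\curv(f))$. Taking the better of the two outputs yields the claimed bounds $\min\{7.75,\, 5.585/(1-\curv(f))\}$ for $f$-MSM and $\min\{4p,\, (2(p+\sqrt{p(p-1)})-1)/(1-\curv(f))\}$ for $f$-MSIS.

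I do not anticipate any serious obstacle. The only genuinely new ingredient is the curvature-driven conversion from $w(M)$ back to $f(M)$; the MWM and MWIS algorithms are invoked as black boxes and require no re-analysis. The mildly subtle point to verify is that the surrogate weights are \emph{edge-local}: they are fixed by a single oracle query at arrival time and do not depend on the current state of the streaming algorithm, which is exactly what a semi-streaming weighted matching or weighted independent-set algorithm expects of its input.
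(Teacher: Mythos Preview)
Your argument is correct, but it takes a different route from the paper's.  The paper does not introduce the singleton surrogate weights $w(e)=f(\{e\})$ at all.  Instead it stays entirely inside the \spcl\ framework: the second procedure is again an $f$-extension (marginal weights $w(e)=f(I_e\cup S_e+e)-f(I_e\cup S_e)$), just run with the value of $\gamma$ that minimizes $\factor{\gamma}$ rather than $\factor{\gamma}+1+1/\gamma$.  The curvature bound is then applied on the \emph{optimum} side, via
\[
  f(\OPTI)\le\sum_{e\in\OPTI}f(\{e\})\le\frac{1}{1-\curv(f)}\sum_{e\in\OPTI}w(e)=\frac{w(\OPTI)}{1-\curv(f)}\le\frac{\factor{\gamma}\,w(I)}{1-\curv(f)}\le\frac{\factor{\gamma}\,f(I)}{1-\curv(f)},
\]
where the last step is \Cref{lem:wmbg}.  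So the paper uses curvature to push singleton values \emph{up} to marginal weights on $\OPTI$, whereas you use curvature to push sum-of-singletons \emph{down} to $f(M)$ on the algorithm's output.

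Your approach has the advantage of being fully black-box: you never need to know that Zelke's or Ashwinkumar's algorithm is \spcl, only that it is a semi-streaming MWM/MWIS algorithm with ratio $\rho_w$.  The paper's approach, on the other hand, shows something slightly stronger (\Cref{lem:curv}): a \emph{single} run of the $f$-extension at any fixed $\gamma$ already simultaneously achieves both ratios $\factor{\gamma}+1+1/\gamma$ and $\factor{\gamma}/(1-\curv(f))$; the two parallel runs are needed only because these two expressions are minimized at different values of $\gamma$.
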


\subsection{Context} \label{sec:context}

To place \Cref{thm:onepass,thm:multipass} in context, we summarize the most
relevant prior work on MWM and MCM. As noted before, our work is the first to
consider MSM.

A one-pass semi-streaming $2$-approximation for MCM is trivial, from the
observation that a maximal matching is a $2$-approximate MCM. Beating this
bound or proving its optimality remains a vexing open problem. Goel, Kapralov,
and Khanna~\cite{gkk} showed that for every $\eps > 0$, finding a
$(3/2-\eps)$-approximate MCM in one pass requires $n^{1+\Omega(1/\log\log n)}$
space, and very recently Kapralov~\cite{Kapralov13} improved this to the best
known approximation ratio lower bound of $\e/(\e-1) \approx 1.582$. Using
multiple passes, McGregor~\cite{mwms} did obtain a $(1+\eps)$-approximation
algorithm, but required $\exp(\eps^{-1})$ passes to do so.

For the MWM problem, Feigenbaum \etal \cite{fgnbm} gave a one-pass
semi-streaming $6$-approximation algorithm, which McGregor~\cite{mwms} improved to
$3+\sqrt{8} \approx 5.828$ by tweaking parameters. McGregor also extended his
algorithm to multiple passes, where each pass essentially ``repeats'' the
first pass, gradually improving the approximation ratio to $2+\eps$ after
$O(\eps^{-3})$ passes.  Zelke further improved the one-pass approximation
factor to $\approx 5.585$ (the exact constant is a degree-$5$ algebraic
number) by using a more involved algorithm. We shall have reason to discuss
Zelke's and McGregor's algorithms in detail in
\Cref{sec:onepass,sec:multipass} respectively. Most recently, Epstein \etal
\cite{elms} gave the current best approximation ratio of $\approx 4.911 +
\eps$, using $O(n\log(n/\eps))$ space.  Their algorithm departs significantly
from previous approaches, and falls outside our aforementioned framework. 
Very recently, Ahn and Guha~\cite{ahnguha} gave a
$(1+\eps)$ approximation with multiple passes, using a very different
algorithm that falls outside our framework.

It is also worth noting another line of work on MWM that has focused on {\em
bipartite} graphs, where it is natural to consider an alternate streaming
model in which {\em vertices} arrive together with all their incident edges.
The seminal online (randomized) algorithm of Karp, Vazirani, and
Vazirani~\cite{KarpVV90} falls in this setting and gives an $\e/(\e-1)$
competitive ratio.  Recent work of Goel, Kapralov, and Khanna~\cite{gkk} gave
a deterministic semi-streaming algorithm with the same approximation ratio.
Kapralov's aforementioned lower bound~\cite{Kapralov13}, which holds despite
the bipartite and vertex-arrival restrictions, shows that this ratio is
optimal.

\subsection{Other Related Work} \label{sec:related}

Thus far, we have been thinking about MSM as a problem about finding a
good matching, generalizing MWM. Another useful viewpoint is to consider
$f$-MSM as the problem of maximizing the submodular function $f(S)$ subject to
$S$ being a matching. This makes MSM an instance of constrained submodular
maximization, which is a heavily-studied topic in optimization.

Maximizing a submodular function is an important problem even without
constraints on the set if one drops the monotonicity requirement. Even its
most famous instance, {\sc Max-Cut}, is not yet fully understood. For monotone
functions, maximizing $f(S)$ becomes nontrivial once one places some sort of
``packing'' constraint on $S$, such as an upper bound on $|S|$. Generalizing
this idea naturally leads one to a {\em matroid constraint}, where $S$ is
required to be an independent set of a matroid. One can consider more general
``independence systems,'' such as the intersection of $p$ different matroids
on the same ground set $E$ (called a $p$-intersection system) or, even more
generally, a {\em $p$-system}, wherein 
\vspace{-9pt}
\[
  \forall\, A \subseteq E:~
    \frac{\max\{|S|:\, S \subseteq A,\, S~\text{maximally independent}\}}
      {\min\{|S|:\, S \subseteq A,\, S~\text{maximally independent}\}} \le p \, .
\]
This last generalization finally captures the constraint of $S$ being a
matching, because matchings form a $2$-system.  All of these classes of
problems were studied in the seminal work of Fisher, Nemhauser and
Wolsey~\cite{nwf1,nwf2}, who showed among other things that the simple greedy
strategy of growing a set $S$ by adding the element that most improves $f(S)$
subject to $S$ being independent yields a $(p+1)$-approximation for a
$p$-system. In another classical work, Jenkyns~\cite{jenkyns}
showed that the greedy strategy with a $p$-system constraint in fact gives a
$p$-approximate solution if the function $f$ is modular (a.k.a. linear).

Notice that the implication of the above for our $f$-MSM problem is a
(non-streaming) greedy $3$-approximation. This should be compared to
our \Cref{thm:multipass}.

More recently, Calinescu \etal \cite{CCPV} gave a polynomial time
$(\e/(\e-1))$-approximation algorithm for maximizing a non-negative monotone
proper submodular function $f$ subject to a matroid constraint. This ratio
improves upon the $2$ that follows from a matroid being a $1$-system, and is
provably optimal if $\PP \ne \NP$. Lee \etal \cite{lmns,lsv} gave {\em local
search} algorithms for maximizing $f$ over a $p$-intersection system,
improving the approximation ratio from the aforementioned $p+1$ (for the more
general $p$-systems) to $p+\eps$. Recently, Feldman \etal \cite{feld} proposed
a new class of independence systems called ``$p$-exchange systems,'' stricter
than $p$-systems but more general than $p$-intersection systems, for which
they gave a local-search-based $(p+\eps)$-approximation. Their paper is highly
recommended for a concise yet comprehensive summary of relevant work on
submodular maximization. Matchings form a $2$-exchange system. Therefore this
last result improved---after a span of over 30 years---the best known
approximation ratio for $f$-MSM from $3$~\cite{nwf2} to $2+\eps$. We invite the 
reader to
compare again with our \Cref{thm:multipass}.


\subsection{Motivation and Significance of Our Results}

In applications such as big data analytics, it is sometimes preferable to
compute a good solution {\em quickly}, even if a theoretically stronger
guarantee can be achieved by a slower algorithm.  Our algorithms in this work
should be seen in this light: they are significant because they are {\em
faster} algorithms with slightly worse approximation ratios than best known
offline approximation algorithms.  Moreover, they are able to handle input
presented in streaming fashion, a clear advantage when handling big data.

Notably, none of the algorithmic strategies discussed in~\Cref{sec:related}
are suitable for use with a graph {\em stream}. Let us focus just on our
problem, $f$-MSM. For an $m$-edge graph, a typical step of the greedy strategy
requires an examination of $\Omega(m)$ edges and $\Omega(m)$ calls to the
$f$-oracle. The local search algorithms need to find a ``good'' local move in
each step and in general it is not clear how a single pass over a graph stream
can guarantee more than one such good move. So a local search algorithm that
makes $\tau$ moves potentially translates into a $\tau$-pass streaming
algorithm, and the best upper bound for $\tau$ in the aforementioned
$(2+\eps)$-approximation algorithm appears to be $O(\eps^{-1} n^4)$. In
contrast, our algorithms make $O(1)$ calls to the $f$-oracle per input item
per pass, use a constant number of passes, and use an essentially optimal
amount of storage.

Very recently, and concurrent with our work, Ashwinkumar and
Vondr\'{a}k~\cite{ashjan} gave a $(p+1+\eps)$-approximation algorithm for
submodular maximization over a $p$-system. Their algorithm can be thought of
as using $O(\eps^{-2} \log^2(m/\eps))$ passes, where $m=|E|$. Our result in
Theorem~\ref{thm:mwis} uses fewer passes (for constant $\eps$), but handles a
smaller class of constraints than $p$-systems.

As noted at the start, the generalization from MWM to MSM has some practical
motivation, such as for the WAP problem from computational
linguistics~\cite{lin}. In addition, we feel that the MSM problem is a
pleasing marriage of submodular maximization, data streaming, and matching
theory; and more generally, MSIS brings together submodularity, streaming, and
matroids.

\section{Preliminaries} \label{sec:prelim}

We start by making our model of computation precise. The input is an
$n$-vertex graph stream, defined as a sequence $\sigma =
\ang{e_1,e_2,\ldots,e_m}$ of distinct edges, where each $e_i = (u_i, v_i) \in
[n] \times [n]$ and $u_i < v_i$. We put $V = [n], E = \{e_1,\ldots,e_m\}$, and
$G = (V,E)$. The submodular function $f:2^E\to\RR_+$, which is part of the
problem specification, is given by an entity external to the stream, called
the {\em value oracle} for $f$, or the $f$-oracle. A data stream algorithm,
after reading each edge from the input stream, is allowed to make an arbitrary
number of {\em calls} to the $f$-oracle. (In fact the algorithms we design
here make only $O(1)$ such calls on average.) A call consists of the algorithm
sending the oracle a subset $S \subseteq E$, whereupon the oracle returns the
value $f(S)$ in constant time.  The space required to describe $S$ counts
towards the algorithm's space usage.

Notice that $f$ is only defined on subsets of $E$, and the most important
restriction on the algorithm is that at any time it can only remember a tiny
portion of $E$. To prevent the algorithm from ``cheating'' and learning about
$E$ indirectly from oracle calls, we say that the algorithm {\em fails} or
{\em aborts} if it ever tries to obtain $f(S)$ with $S \not\subseteq E$.

\subsection{Hardness of MSM}

We cannot hope to solve MSM exactly. As noted in \Cref{sec:context}, even the
very special case MCM cannot be approximated any better than $\e/(\e-1)$ in
the semi-streaming setting~\cite{Kapralov13}. But the data stream model does
not adequately capture the vast gulf between MWM and MSM.

\begin{theorem} \label{thm:apx}
  For every $C < \e/(\e-1)$, there does not exist a polynomial-time
  $C$-approximation algorithm for $f$-MSM relative to a value oracle for $f$.
\end{theorem}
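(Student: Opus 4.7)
The plan is to reduce from the classical cardinality-constrained monotone submodular maximization problem, for which Nemhauser and Wolsey proved, in the value oracle model, that for every $\eps > 0$ no algorithm making $\mathrm{poly}(N)$ queries to a value oracle for a non-negative monotone proper submodular $g: 2^X \to \RR_+$ (with $|X| = N$) achieves a $(\e/(\e-1)-\eps)$-approximation to $\max\{g(S): |S| \le k\}$. Since any polynomial-time algorithm relative to a value oracle makes only polynomially many oracle calls, it suffices to exhibit a polynomial-time, approximation-preserving reduction from this cardinality problem to $f$-MSM.

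Given an instance $(X, g, k)$ with $|X| = N$, I would take $G$ to be the complete bipartite graph $K_{k, N}$ with parts $L = \{u_1, \dots, u_k\}$ and $R$ placed in bijection with $X$. For $M \subseteq E(G)$, let $\pi_R(M) \subseteq R$ denote the set of $R$-endpoints of edges of $M$, and define $f(M) := g(\pi_R(M))$. A single query to $f$ can be answered with one query to $g$ plus polynomial post-processing, and $f$ is plainly non-negative, monotone, and proper. For submodularity, combine the identities $\pi_R(A \cup B) = \pi_R(A) \cup \pi_R(B)$ and $\pi_R(A \cap B) \subseteq \pi_R(A) \cap \pi_R(B)$ with the monotonicity and submodularity of $g$ to obtain $f(A \cup B) + f(A \cap B) \le f(A) + f(B)$.

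Approximation preservation is then immediate. Since $G$ is bipartite with $|L| = k$, every matching $M$ in $G$ satisfies $|\pi_R(M)| = |M| \le k$, and conversely every $S \subseteq R$ with $|S| \le k$ equals $\pi_R(M)$ for a matching $M$ of size $|S|$, obtained by pairing $S$ with any injection into $L$. Hence $\max_M f(M) = \max_{|S| \le k} g(S)$, and any $C$-approximate $f$-MSM $M$ yields a $C$-approximate cardinality-constrained solution $\pi_R(M)$. This transports the Nemhauser--Wolsey lower bound to $f$-MSM for every $C < \e/(\e-1)$.

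The main conceptual step is choosing the gadget: the naive try of letting $G$ be a disjoint union of $N$ edges identifies matchings with arbitrary subsets of $X$ but imposes no cardinality budget, so the hard cardinality-constrained problem is not simulated. Using $K_{k,N}$ and projecting onto the $R$-side is what cleanly encodes the budget $|M| \le k$ while allowing every admissible subset to be realized as $\pi_R(M)$. Once the construction is fixed, submodularity, oracle simulation, and approximation-preservation are all short calculations.
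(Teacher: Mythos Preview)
Your proposal is correct and follows essentially the same approach as the paper: both reduce from the Nemhauser--Wolsey hardness of cardinality-constrained monotone submodular maximization by constructing a graph whose matchings realize exactly the subsets of size at most $k$, with $f$ defined as $g$ composed with the projection to the ground set. The only cosmetic difference is the gadget---the paper uses a disjoint union of $k$ star graphs rather than $K_{k,N}$---but the two encodings are interchangeable and the argument is the same.
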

\begin{proof}
  Given a submodular $f$, we can view $f$-MSM as a generalization of the
  constrained maximization problem $\max_{|S|\le k} f(S)$: consider
  $f$-MSM on a disjoint
  union of $k$ star graphs. Nemhauser and Wolsey~\cite[Theorem~4.2]{nemwol}
  show that for the latter problem, given the upper bound on $C$, a
  $C$-approximation algorithm must make a superpolynomial number of calls to
  the $f$-oracle.
\end{proof}

\subsection{A Framework for Streaming MSM and MSIS Algorithms} \label{sec:generic}

We proceed to describe a generic streaming algorithm for $f$-MSM, which
defines the framework alluded to in \Cref{sec:results}. In fact, as noted
towards the end of~\Cref{sec:results}, our framework applies to the much more
general problem of $f$-MSIS (Maximum Submodular Independent Set), an instance
of which is given by a submodular $f:2^E\to\RR_+$ and a collection
$\IC\subseteq 2^E$ of {\em independent sets} such that $\emptyset\in\IC$. We
put $m := |E|$ and $n := \max_{I\in\IC} |I|$. We assume that independence
(i.e., membership in $\IC$) can be tested easily; we require no other structural
property of $\IC$. For the special case of MSM,
$\IC$ is the collection of matchings in a graph with edge set $E$.

The generic algorithm for $f$-MSIS starts with a given independent set
$\prevmatch$ (possibly empty) and then proceeds to make {\em one} pass over
the input stream $\sigma$, attempting to end up with an improved independent
set $I$ by the end of the pass. The algorithm processes the elements in $E$ in
a {\em pretend stream order} that consists of an arbitrary permutation of the
elements in $\prevmatch$, followed by the elements in $E \setminus \prevmatch$
in the same order as $\sigma$.
Throughout, the algorithm maintains a ``current solution'' $I\in\IC$, a set
$S\subseteq E$ of ``shadow elements'' (this term is borrowed from
Zelke~\cite{zelke}), and a weight $w(e)$ for each element $e$ it has
processed.  The algorithm bases its decisions on a real-valued parameter
$\gamma > 0$. For a set $A \subseteq E$, we denote $w(A) := \sum_{e\in A}
w(e)$. An {\em augmenting pair} for a set $I\in\IC$ is a pair of sets $(A,J)$
such that
$J \subseteq I$ and $(I \setminus J)\cup A \in
\IC$. For $e\in E$, define $A+e$ to be $A\cup\{e\}$.

\algrenewcommand\algorithmicforall{\textbf{foreach}}
\begin{algorithm}[!ht]
  \caption{~~Generic One-Pass Independent Set Improvement Algorithm for $f$-MSIS
    \label{alg:generic}}
  \begin{algorithmic}[1]
    \Function{Improve-Solution}{$\sigma, \prevmatch, \gamma$}
    \State $I \gets \emptyset,\, S \gets \emptyset$
    \ForAll{$e \in \prevmatch$ in some arbitrary order}
	$w(e) \gets f(I + e) - f(I)$, $I\gets I + e$ 
	\vspace{-2pt} \label{alg:generic:initloop}
    \EndFor
    \ForAll{$e \in \sigma\setminus \prevmatch$ in the order given by $\sigma$}
	\Call{Process-Element}{$e,I,S$} \vspace{-2pt} \label{alg:generic:mainloop}
    \EndFor
    \State \Return{$I$}
    \EndFunction
    \Statex

    \Procedure{Process-Element}{$e,I,S$} \Comment{{\footnotesize Note: Assigns weight 
    $w(e)$ and modifies $I$ and $S$.}}
    \State $w(e) \gets f(I \cup S + e) - f(I \cup S)$ \label{alg:generic:wdef}
    \State $(A, J)\gets$ a well-chosen augmenting pair for $I$ with 
    $A \subseteq I\cup S + e$ and $w(A) \ge (1+\gamma)w(J)$ \label{alg:generic:improve}
    \State $S \gets$ a well-chosen subset of $(S \setminus A) \cup J$ 
    \label{alg:generic:updateS}
	\State $I \gets (I \setminus J) \cup A$ \Comment{{\footnotesize Augment 
	independent set $I$ using $A$.}}
	\label{alg:generic:updateM}
    \EndProcedure
    \end{algorithmic}
\end{algorithm}


Notice that $\Call{Process-Element}{}$ maintains the invariant that $w(e)$ is
defined for all $e\in I\cup S$. Therefore, \Cref{alg:generic:improve} never 
tries to
access an element weight before defining it. Furthermore, the algorithm need
only remember the weights of elements in $I \cup S$. Therefore, the space
usage of the algorithm is bounded by $O((|P| + |I| + |S|)\log m) = O((n +
|S|)\log m)$, since $P$ and $I$ are independent sets.

To instantiate this generic algorithm, one must specify the precise logic used
in \Cref{alg:generic:improve,alg:generic:updateS}. If the algorithm is for
MWIS rather than MSIS, then $w(e)$ values are already given and assignments to
those values (see \Cref{alg:generic:initloop,alg:generic:wdef}) should be ignored.

\begin{definition} \label{def:spcl}
  We say that an MWIS algorithm is {\em \spcl} if each pass instantiates
  \Cref{alg:generic} in the above sense, i.e., it starts with some solution
  $\prevmatch\in\IC$ computed in the previous pass and calls
  $\Call{Improve-Solution}{\sigma,\prevmatch,\gamma}$.  The parameter $\gamma$
  need not be the same for all passes.
\end{definition}

\begin{definition}
  For a submodular $f$, we define an {\em $f$-extension} of a \spcl MWIS
  algorithm $\cA$ to be \Cref{alg:generic}, with the logic used in
  \Crefrange{alg:generic:improve}{alg:generic:updateS} being borrowed from
  $\cA$, and with values of the parameter $\gamma$ possibly differing from
  those used by $\cA$.  \end{definition}

\begin{lemma}[Modular to submodular] \label{lem:generic}
  Let $\mathcal{A}$ be a one-pass \spcl MWIS algorithm that computes a
  $\factor{\gamma}$-approximate MWIS when run with parameter $\gamma$. Then,
  for every non-negative monotone proper submodular $f$, its $f$-extension
  with parameter $\gamma$
  computes a $(\factor{\gamma} + 1 + 1/\gamma)$-approximate $f$-MSIS.
\end{lemma}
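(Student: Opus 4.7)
The plan is to relate $f(\OPTI)$ to $f(I)$, where $I$ is the set returned by the $f$-extension and $\OPTI$ is an optimal $f$-MSIS, using the dynamically assigned weights $w$ as a bridge. The proof proceeds in three stages: (1) lift $\cA$'s approximation guarantee to the computed weights, (2) prove three submodular bridges between $f$ and $w$, and (3) chain them to obtain the target ratio.

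Stage 1 uses the \spcl property crucially. Because $\cA$ is \spcl, its MWIS analysis relies only on local invariants encoded in the generic framework---the augmenting-pair condition $w(A)\ge(1+\gamma)w(J)$ and the \spcl control flow---and not on weights being fixed in advance. Since the $f$-extension executes the same control flow and enforces the same local condition on its computed weights, the analysis of $\cA$ transports verbatim and yields $w(\OPTI)\le \factor{\gamma}\cdot w(I)$.

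For Stage 2, let $\ttt$ denote the elements ever inserted into $I$ during the pass, let $U_t = I_t\cup S_t$ be the ``universe'' at step $t$, and let $X_e = U_{t_e}$ where $t_e$ is the processing time of $e$. Bridge~(a) asserts $w(I)\le f(I)$: order $I$ by entry time as $i_1,\dots,i_k$; the predecessors $\{i_1,\dots,i_{j-1}\}$ lie in $X_{i_j}$ (as they remain in $I$), so the diminishing-returns form of submodularity gives $w(i_j)\le f(\{i_1,\dots,i_j\})-f(\{i_1,\dots,i_{j-1}\})$, and summing telescopes. Bridge~(b) asserts $f(\ttt)\le(1+1/\gamma)\,w(I)$: since $U_{t+1}\subseteq U_t\cup\{e_t\}$, and for the ``global'' universe $\bigcup_{s\le t}U_s \supseteq U_t$, diminishing returns gives $f\bigl(\bigcup_{s\le t+1}U_s\bigr)-f\bigl(\bigcup_{s\le t}U_s\bigr)\le w(e_t)$ whenever $e_t$ newly enters the universe, so $f(\ttt)\le w(\ttt)$; then applying $w(A)\ge(1+\gamma)w(J)$ summed across the pass gives $w(\kkk)\le w(I)/\gamma$ for the kicked-out set $\kkk=\ttt\setminus I$, hence $w(\ttt)=w(I)+w(\kkk)\le(1+1/\gamma)w(I)$. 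Bridge~(c) asserts $f(\OPTI)\le f(\ttt)+w(\OPTI)$: for every processed $e$ we have $X_e\subseteq\ttt$, so diminishing returns gives $w(e)\ge f(\ttt+e)-f(\ttt)$; summing over $e\in\OPTI$ and invoking submodularity of marginals once more yields $w(\OPTI)\ge f(\ttt\cup\OPTI)-f(\ttt)\ge f(\OPTI)-f(\ttt)$.

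Chaining bridge~(c), bridge~(b) and Stage~1, and finally bridge~(a) gives
\[
f(\OPTI)\le f(\ttt)+w(\OPTI)\le (1+1/\gamma)\,w(I)+\factor{\gamma}\,w(I)=(\factor{\gamma}+1+1/\gamma)\,w(I)\le(\factor{\gamma}+1+1/\gamma)\,f(I),
\]
which is exactly the claimed approximation ratio. The main obstacle I foresee is Stage~1: one must carefully verify that $\cA$'s MWIS proof genuinely uses only the local invariants of \Cref{alg:generic}, so that it lifts to dynamically computed weights. Secondary technicalities include handling the ``well-chosen subset'' in \Cref{alg:generic:updateS}---for purposes of the analysis we can conceptually retain all of $(S\setminus A)\cup J$, which only increases the universe and keeps $X_e\subseteq\ttt$---and ruling out elements cycling between $I$ and $S$, which can be arranged by the same choice or absorbed into a small extra accounting term.
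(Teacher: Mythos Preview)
Your proposal is correct and follows essentially the same route as the paper. Your bridges (a), (b), (c) are exactly the paper's \Cref{lem:wmbg}, \Cref{lem:btrailg} combined with the first half of \Cref{lem:obg}, and the second half of \Cref{lem:obg}, respectively; the final chaining is identical. Two minor remarks: in bridge~(a) the predecessors are guaranteed to lie in $X_{i_j}=I_{i_j}\cup S_{i_j}$, not necessarily in $I_{i_j}$ itself (they may have moved to $S$), and your closing worry about elements ``cycling between $I$ and $S$'' is unnecessary---such cycling is allowed and harmless; what the framework guarantees (and what the argument actually uses) is that once an element leaves $I\cup S$ it never returns.
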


The rest of the paper is organized as follows. {\bf \Cref{sec:onepass}}
develops some important properties of \spcl algorithms and proves
\Cref{lem:generic}. Applying this lemma, the $f$-extension of Zelke's one-pass
MWM algorithm~\cite{zelke} is easily shown to yield a $7.75$-approximation for
$f$-MSM, proving \Cref{thm:onepass}.  {\bf \Cref{sec:multipass}} revisits
McGregor's multi-pass algorithm and analysis~\cite{mwms}, extends his
analysis, and obtains a $(3+\eps)$-approximation for $f$-MSM, proving
\Cref{thm:multipass}. {\bf \Cref{sec:hypg,sec:matroid}} briefly discuss
our results for MWM and MSM in hypergraphs, and MWIS and MSIS for the
intersection of $p$ matroids. Technical details of these results are given in
the appendix.


\section{A One-Pass Solution via \Spcl Algorithms} \label{sec:onepass}

Consider the $f$-extension with parameter $\gamma$ of a particular one-pass
compliant algorithm. Let $I$ denote its output, i.e., the result of invoking
$\Call{Improve-Solution}{\sigma,\emptyset,\gamma}$ and $\OPTI$ be an $f$-MSIS.
Let $I_e, S_e$ denote the contents of the variables $I,S$ in
\Cref{alg:generic} just before element $e$ is processed.  Let
$\kkk=(\bigcup_{e\in E}I_e)\setminus I$ denote the set of elements that were
added to the current solution at some point but were {\em killed} and did not make it to
the final output.  Then $\bigcup_{e\in E}S_e \subseteq I\cup \kkk$, because an
element can become a shadow element only when it was removed from the
current solution at some point (see \Cref{alg:generic:updateS} of
\Cref{alg:generic}). Hence
\begin{equation}
  \textstyle \bigcup_{e\in E} (I_e\cup S_e) \subseteq I\cup \kkk \, .
  \label{eq:sh}
\end{equation}

\begin{lemma}\label{lem:btrailg}
   For an $f$-extension of a \spcl algorithm, we have $w(\kkk)\le w(I)/\gamma$.
\end{lemma}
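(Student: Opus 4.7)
Since $\prevmatch = \emptyset$, the current solution starts at $I_0 = \emptyset$ and evolves only through the augmentations performed in \Call{Process-Element}{}. Index the stream by $t$, let $(A_t, J_t)$ denote the augmenting pair chosen at step $t$, and write $I_t$ for the state of $I$ just before step $t$. The framework guarantees two things I will exploit: (i) $w(A_t) \ge (1+\gamma)\,w(J_t)$ for every $t$, and (ii) each killed element $k \in \kkk$ is evicted from $I$ at some point, hence $k \in J_t$ for at least one value of $t$.

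The argument is a short telescoping-plus-charging computation. First, I invoke the property (satisfied by the concrete instantiations of the framework, e.g., Zelke's algorithm) that the chosen augmentations are \emph{clean}: $A_t$ is disjoint from $I_t \setminus J_t$, so every element listed in $A_t$ is genuinely being added to $I$ rather than merely re-listed. Under cleanness, the update $I_{t+1} = (I_t \setminus J_t) \cup A_t$ changes $w(I)$ by exactly $w(A_t) - w(J_t)$, and telescoping from $I_0 = \emptyset$ yields $w(I) = \sum_t w(A_t) - \sum_t w(J_t)$. Summing inequality~(i) over $t$ and rearranging gives $w(I) \ge \gamma \sum_t w(J_t)$. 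Next, fact~(ii) together with non-negativity of the weights (themselves non-negative since $f$ is monotone) shows $\sum_t w(J_t) \ge \sum_{k \in \kkk} w(k) = w(\kkk)$, because each killed element contributes its weight to at least one term of the sum and no term is negative. Chaining the two displays yields $w(\kkk) \le w(I)/\gamma$.

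The main subtlety to watch is the cleanness hypothesis. Without it, the telescoping identity would pick up a non-negative slack term from elements of $A_t$ that already sit in $I_t \setminus J_t$, and the bound would not come out as crisply. The concrete algorithms we plug into the framework all pick $A_t \subseteq S_t + e$, which makes cleanness automatic; for a fully generic \spcl algorithm one can either fold this condition into the definition or work with per-element multiplicities $|\{t : e \in A_t\}|$ and $|\{t : e \in J_t\}|$, which still yield the analogous net-inflow bound.
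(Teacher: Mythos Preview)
Your argument is essentially the paper's own proof: telescope the per-step weight increase $w(A_t)-w(J_t)\ge\gamma\,w(J_t)$ to get $w(I)\ge\gamma\sum_t w(J_t)$, then observe $\kkk\subseteq\bigcup_t J_t$ so $w(\kkk)\le\sum_t w(J_t)$. One small correction: your justification of cleanness via ``$A_t\subseteq S_t+e$'' is not literally true for Zelke's algorithm (his candidate set $T$ contains current-matching edges), but cleanness still holds there because any $a\in A\cap M$ automatically lies in $J=M\intedges A$; with that fix the proof is complete and matches the paper.
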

\begin{proof}
  Let $A_e, J_e$ be the sets $A, J$ chosen at \Cref{alg:generic:improve} when
  processing $e$. Each augmentation by $A_e$ (\Cref{alg:generic:updateM})
  increases the weight of the current solution by $w(A_e)-w(J_e) \ge 
  \gamma w(J_e)$.  Hence,
  $w(I)/\gamma \ge \textstyle \sum_{e\in E}w(J_e)$.

  The set $\bigcup_{e\in E} J_e$ consists of elements that were
  removed from the current solution at some point. Thus, it includes 
  $K$
  (the inclusion may be proper: $\kkk$ does not contain elements that were 
  removed from the current solution, reinserted, and eventually ended up in 
  $I$).  Therefore,
  \[
    w(\kkk)
    \le w\big(\textstyle \bigcup_{e\in E} J_e\big)
    \le \sum_{e\in E}w(J_e)
    \le w(I)/\gamma \, . \qedhere
  \]
\end{proof}

\begin{lemma}\label{lem:wmbg}
  For an $f$-extension of a \spcl algorithm, we have $w(I)\le f(I)$.
\end{lemma}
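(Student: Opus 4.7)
The plan is to prove $w(I) \le f(I)$ by telescoping, ordering the elements of $I$ by the time at which they were first processed by the algorithm. Write $I = \{e_1, e_2, \ldots, e_k\}$ where $\tau(e_1) < \tau(e_2) < \cdots < \tau(e_k)$ and $\tau(e)$ denotes the time of first processing (either a position in the init loop if $e \in \prevmatch$, or the position in $\sigma$ otherwise). Set $I^{(j)} := \{e_1, \ldots, e_j\}$, so $f(I) = \sum_{j=1}^{k} \bigl(f(I^{(j)}) - f(I^{(j-1)})\bigr)$.

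At time $\tau(e_j)$, let $X_j$ denote the contents of $I \cup S$ just before $e_j$'s weight is assigned. By \Cref{alg:generic:wdef} (and \Cref{alg:generic:initloop}, where $S = \emptyset$), we have $w(e_j) = f(X_j + e_j) - f(X_j)$. The crux of the argument will be to show the invariance
\[
  \{e_1, \ldots, e_{j-1}\} \subseteq X_j \quad \text{for every } j \in [k] \, .
\]
Once this is established, the diminishing-returns property~\eqref{eq:subm2} gives $w(e_j) \le f(I^{(j-1)} + e_j) - f(I^{(j-1)}) = f(I^{(j)}) - f(I^{(j-1)})$, and summing over $j$ yields $w(I) \le f(I)$ as desired.

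The main obstacle is justifying the invariance. I plan to argue it via the following stability claim: once an element $e$ with $e \in I$ (the final output) has been added to $I \cup S$ for the first time, it remains in $I \cup S$ at every subsequent point during the pass. The proof is by contradiction: suppose some $e_i \in I$ is ``lost,'' i.e., present in $I \cup S$ at one point but absent from $I \cup S$ at some strictly later point. After $e_i$ is lost, inspection of \Cref{alg:generic:updateS,alg:generic:updateM} in \Call{Process-Element}{} shows that new elements entering $I \cup S$ come only from the set $A \cup J \subseteq I \cup S + e$, where $e$ is the currently processed element. Since each element is processed at most once and $e_i$'s one processing time has passed, $e_i$ can never re-enter $I \cup S$; hence $e_i \notin I$, a contradiction. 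Since each $e_i$ with $i < j$ has first-processing time before $e_j$, stability gives $e_i \in X_j$, completing the invariance.

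Finally, the base cases for the invariance at $\tau(e_j)$ are handled uniformly: if $e_j$ is processed during the init loop, then all earlier-processed $e_1, \ldots, e_{j-1}$ must also lie in $\prevmatch$ and were inserted directly into $I$ by \Cref{alg:generic:initloop}; if $e_j$ is processed during the main loop, the stability claim above applies to each $e_i$ (whether in $\prevmatch$ or not). In both cases the telescoping argument goes through and yields $w(I) = \sum_j w(e_j) \le f(I) - f(\emptyset) = f(I)$, using properness of $f$.
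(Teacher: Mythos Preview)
Your proof is correct and follows essentially the same approach as the paper: order $I$ by processing time, show that the prefix $\{e_1,\ldots,e_{j-1}\}$ is contained in $I_{e_j}\cup S_{e_j}$, apply diminishing returns, and telescope. The paper states the key invariant as the equality $I\cap(I_{e^I_i}\cup S_{e^I_i})=\{e^I_1,\ldots,e^I_{i-1}\}$ (justified by the one-line observation that an element dropped from $I\cup S$ can never return), whereas you isolate only the needed inclusion and prove it via your explicit ``stability claim''; the arguments are the same in substance.
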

\begin{proof}
  Let $e^I_1, e^I_2,\ldots, e^I_s$ be an enumeration of $I$ in order of
  processing, where $s = |I|$. The logic of \Cref{alg:generic} ensures that an
  element once removed from the shadow set can never return to the current
  solution (though elements can move
  between the two arbitrarily). Thus, $I\cap
  (I_{e^I_i}\cup S_{e^I_i})=\{e^I_1, e^I_2,\ldots, e^I_{i-1}\}$. 
  Since $I\cap (I_{e^I_i}\cup S_{e^I_i}) \subseteq (I_{e^I_i}\cup S_{e^I_i})$
  and $f$ is submodular, \Cref{eq:subm2} gives
  \[f(\{e^I_1, e^I_2,\ldots, e^I_i\})-f(\{e^I_1, e^I_2,\ldots,e^I_{i-1}\})
    \ge f(I_{e^I_i}\cup S_{e^I_i}+e^I_i)-f(I_{e^I_i}\cup S_{e^I_i})
    =w(e^I_i)\ .\]
  Summing this over $i\in[s]$ gives
  $f(I) = f(I)-f(\emptyset)\ge \sum_{i=1}^s w(e^I_i)=w(I)$.
\end{proof}

\begin{lemma} \label{lem:obg}
  For an $f$-extension of a \spcl algorithm, we have
  $f(\OPTI)\le (1/\gamma + 1) f(I) + w(\OPTI)$.
\end{lemma}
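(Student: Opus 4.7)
The plan is to sandwich $f(\OPTI)$ between $f(I \cup \kkk)$ plus an additive error of $w(\OPTI)$, and then to bound $f(I \cup \kkk)$ purely in terms of the weight function. Both halves of the argument rest on \Cref{eq:sh}, which guarantees that every ``context set'' $I_e \cup S_e$ appearing during the run of the algorithm is contained in $I \cup \kkk$.

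First I would show $f(\OPTI) \le f(I \cup \kkk) + w(\OPTI)$. Since $f$ is monotone, $f(\OPTI) \le f(\OPTI \cup I \cup \kkk)$. Enumerating $\OPTI \setminus (I \cup \kkk)$ in an arbitrary order and telescoping, submodularity (\Cref{eq:subm2}) upper-bounds each marginal gain by the corresponding one-step marginal against $I \cup \kkk$. For each such $e^*$, a second application of submodularity, justified by $I_{e^*} \cup S_{e^*} \subseteq I \cup \kkk$ from \Cref{eq:sh}, gives
\[
  f(I \cup \kkk + e^*) - f(I \cup \kkk) \;\le\; f(I_{e^*} \cup S_{e^*} + e^*) - f(I_{e^*} \cup S_{e^*}) \;=\; w(e^*).
\]
Summing over $e^* \in \OPTI \setminus (I \cup \kkk)$ and using that all weights are non-negative (by monotonicity), the total is at most $w(\OPTI)$, yielding the desired bound.

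Next I would bound $f(I \cup \kkk) \le w(I) + w(\kkk)$, mirroring the proof of \Cref{lem:wmbg} but applied to the union $I \cup \kkk$ rather than $I$ alone. Enumerate $I \cup \kkk$ in the order its elements were processed, as $x_1, \ldots, x_t$. By \Cref{eq:sh}, $I_{x_i} \cup S_{x_i} \subseteq I \cup \kkk$, and since every element therein was processed strictly before $x_i$, in fact $I_{x_i} \cup S_{x_i} \subseteq \{x_1, \ldots, x_{i-1}\}$. Submodularity then gives $w(x_i) \ge f(\{x_1, \ldots, x_i\}) - f(\{x_1, \ldots, x_{i-1}\})$, and telescoping (using $f(\emptyset) = 0$) yields $w(I) + w(\kkk) = \sum_i w(x_i) \ge f(I \cup \kkk)$.

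Combining the two inequalities with \Cref{lem:btrailg} ($w(\kkk) \le w(I)/\gamma$) and \Cref{lem:wmbg} ($w(I) \le f(I)$) gives
\[
  f(\OPTI) \;\le\; w(I) + w(\kkk) + w(\OPTI) \;\le\; (1 + 1/\gamma)\,w(I) + w(\OPTI) \;\le\; (1 + 1/\gamma)\,f(I) + w(\OPTI),
\]
as claimed. The main subtlety is applying submodularity in the \emph{correct direction} in the first step: the context sets $I_{e^*} \cup S_{e^*}$ defining $w(e^*)$ must be subsets of $I \cup \kkk$, which is exactly what \Cref{eq:sh} buys us. Once the two subset relations it yields are in hand, everything else is routine telescoping.
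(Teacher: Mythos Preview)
Your proof is correct and follows essentially the same route as the paper's: both arguments bound $f(I\cup\kkk)\le w(I)+w(\kkk)$ by telescoping over $I\cup\kkk$ in processing order (using that $I_{x_i}\cup S_{x_i}\subseteq\{x_1,\ldots,x_{i-1}\}$), and then bound $f(\OPTI)\le f(I\cup\kkk)+w(\OPTI)$ by telescoping over $\OPTI\setminus(I\cup\kkk)$ and invoking \Cref{eq:sh}. The only cosmetic differences are that you present the two halves in the opposite order and split one application of submodularity into two, going through $f(I\cup\kkk+e^*)-f(I\cup\kkk)$ as an intermediate quantity.
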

\begin{proof}
  Let $e^B_1, \ldots, e^B_b$ be an enumeration of $B := I\cup \kkk$ in
  order of processing.  The set $B_i := \{e^B_1, \ldots, e^B_{i-1}\}$ consists
  of elements inserted into the current solution before $e_i^B$ was 
  processed.
  Meanwhile $I_{e^B_i}\cup S_{e^B_i}$ is the subset of these elements that were
  not removed before $e^B_i$ was processed.  Thus, $B_i\supseteq I_{e^B_i}\cup
  S_{e^B_i}$ for all $i\in [b]$.  By submodularity of~$f$ and \Cref{eq:subm2},
  \[
    f(\{e^B_1, e^B_2,\ldots, e^B_i\})-f(\{e^B_1, e^B_2,\ldots,
    e^B_{i-1}\})\le f(I_{e^B_i}\cup S_{e^B_i}+e^B_i)-f(I_{e^B_i}\cup 
    S_{e^B_i})=w(e^B_i) \, .
  \]
  Summing this over $i\in[b]$ gives
  $f(B) = f(B)-f(\emptyset)\le w(B)$.
  Thus, we have
  \begin{equation}
    f(I\cup \kkk)
    \le w(I\cup \kkk)
    =   w(I)+w(\kkk)
    \le f(I)+w(I)/\gamma
    =   (1/\gamma + 1) f(I)\ ,
    \label{eq:mutb}
  \end{equation}
  where the last two inequalities use \Cref{lem:wmbg} and \Cref{lem:btrailg}
  respectively.

  Now we bound $f(\OPTI)$.
  Let $\OPTI \setminus (I\cup 
  \kkk)=\{e^{\OPTI}_1,e^{\OPTI}_2,\ldots,e^{\OPTI}_t\}$; this enumeration is in 
  arbitrary order. Put $D_0=I\cup \kkk$, $D_i=I\cup \kkk \cup \{e^{\OPTI}_1, 
  \ldots, e^{\OPTI}_i\}$ for $i\in[t]$.
  By \Cref{eq:sh}, $D_{i-1}\supseteq I\cup \kkk \supseteq I_{e^{\OPTI}_i}\cup 
  S_{e^{\OPTI}_i}$.
  Appealing to submodularity and \Cref{eq:subm2} again,
  \[
    f(D_i)-f(D_{i-1})
    \le f(I_{e^{\OPTI}_i}\cup 
    S_{e^{\OPTI}_i}+e^{\OPTI}_i)-f(I_{e^{\OPTI}_i}\cup S_{e^{\OPTI}_i}) = 
    w(e^{\OPTI}_i) \, .
  \]
  Summing this over $i\in[t]$ gives
  $f(D_t)-f(D_0)\le w(\OPTI\setminus (I\cup \kkk))\le w(\OPTI)$.
  In other words, $f(I\cup \kkk\cup \OPTI)-f(I\cup \kkk)\le w(\OPTI)$. By 
  monotonicity of
  $f$ and \Cref{eq:mutb}, we have
  \begin{equation}
    f(\OPTI)\le f(I\cup \kkk\cup \OPTI)\le f(I\cup \kkk) + w(\OPTI)
    \le (1/\gamma + 1) f(I) + w(\OPTI)\ .
    \label{eq:owtb}
    \qedhere
  \end{equation}
\end{proof}

\begin{proof}[\textbf{Proof of \Cref{lem:generic}}]
  Since the \spcl algorithm $\cA$ outputs a $\factor{\gamma}$-approximate MWIS,
  it satisfies $w(\OPTI)\le \factor{\gamma} w(I)$ for any weight assignment; in
  particular, the weights assigned by its $f$-extension. Using
  \Cref{lem:obg} and \Cref{lem:wmbg}, we conclude that $f(\OPTI) \le
  (\factor{\gamma} + 1 + 1/\gamma) f(I)$.
\end{proof}

\begin{proof}[\textbf{Proof of \Cref{thm:onepass}}]
  Recall that an ``independent set'' is just a matching in the setting of MWM
  and MSM. Zelke's algorithm chooses the augmenting pair $(A,J)$ as follows:
  $A$ is chosen from an $O(1)$-sized ``neighborhood'' of the edge $e$ being
  processed, and $J$ is set to be $M\intedges A$: the set of edges in $M$ that
  share a vertex with some edge in $A$. It chooses $S$ so that each shadow
  edge intersects some edge in the current matching, thus enforcing $|S| =
  O(n)$ and a space bound of $O(n\log n)$ bits. For the reader's convenience
  we spell out the logic of the algorithm in full in \Cref{app:zelke}.

  Zelke's algorithm is \spcl with $\factor{\gamma} = 2(1+\gamma) +
  (1/\gamma+1) - \gamma/(1+\gamma)^2$ ~\cite[Theorem~3]{zelke}.  By
  \Cref{lem:generic}, its $f$-extension yields an approximation ratio of
  $2(1+\gamma)^2/\gamma - \gamma/(1+\gamma)^2$, which attains a minimum value
  of $7.75$ at $\gamma=1$. This proves the theorem.
\end{proof}

\subsection{Approximation Ratio in Terms of Curvature}

We now show how to obtain the stronger guarantee for $f$-MSM given
in~\Cref{thm:curvature}. The tool we need is the following strengthening
of~\Cref{lem:generic}.

\begin{lemma} \label{lem:curv}
  Let $\cA$ be a one-pass \spcl MWIS algorithm that computes a
  $\factor{\gamma}$-approximate MWIS when run with parameter $\gamma$. Then,
  for every non-negative monotone proper submodular $f$, its $f$-extension
  with parameter $\gamma$
  computes a $\min\{\factor{\gamma}+1+1/\gamma, 
  \factor{\gamma}/(1-\curv(f))\}$-approximate $f$-MSIS.
\end{lemma}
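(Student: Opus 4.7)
The plan is to keep the bound $c_\gamma + 1 + 1/\gamma$ from \Cref{lem:generic} and prove the additional bound $c_\gamma/(1-\curv(f))$, then take the minimum. Since the first bound is already established, I only need to prove the curvature-dependent inequality $f(\OPTI) \le c_\gamma\,f(I)/(1-\curv(f))$ for the output $I$ of the $f$-extension.

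The key observation is that the algorithm's assigned weights $w(e) = f(I\cup S + e) - f(I\cup S)$ are squeezed between $(1-c)\,f(\{e\})$ and $f(\{e\})$, where $c := \curv(f)$: the upper bound is immediate from submodularity \eqref{eq:subm2} applied with $Y = \emptyset$ and $X = I \cup S$ (together with $f(\emptyset) = 0$), while the lower bound is exactly what \eqref{eq:curv} gives when we take $A = I \cup S$, provided $e \notin I \cup S$ at the moment $w(e)$ is assigned. This last proviso needs a quick inspection of \Cref{alg:generic}: in the main loop $e \in \sigma \setminus P$ arrives fresh and $I \cup S$ was built from previously processed elements; in the initialization loop $S = \emptyset$ and $I$ contains only earlier elements of $P$, none equal to $e$. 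So the curvature bound $w(e) \ge (1-c)\,f(\{e\})$ holds for every weight assigned by the $f$-extension.

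From here the chain is short. Subadditivity (an easy consequence of submodularity and $f(\emptyset)=0$) gives $f(\OPTI) \le \sum_{e\in\OPTI} f(\{e\})$, which combined with the curvature bound rearranged as $f(\{e\}) \le w(e)/(1-c)$ yields $f(\OPTI) \le w(\OPTI)/(1-c)$. The $\cA$-guarantee $w(\OPTI) \le c_\gamma\,w(I)$ applies to any weight assignment, including the one computed by the $f$-extension, and finally \Cref{lem:wmbg} gives $w(I) \le f(I)$. Chaining these: $f(\OPTI) \le c_\gamma\,f(I)/(1-c)$. Combined with \Cref{lem:generic}, the approximation ratio is the minimum of the two bounds, as claimed.

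The only subtle step is the one flagged above: verifying that at the instant $w(e)$ is assigned, the element $e$ is genuinely outside $I\cup S$, since the curvature definition \eqref{eq:curv} requires $e \notin A$. Everything else is a routine chaining of submodularity, subadditivity, and the MWIS guarantee of $\cA$; no new algorithmic idea is needed.
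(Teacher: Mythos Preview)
Your proof is correct and follows essentially the same route as the paper: subadditivity gives $f(\OPTI)\le\sum_{e\in\OPTI}f(\{e\})$, the curvature definition turns each $f(\{e\})$ into $w(e)/(1-\curv(f))$, the $\factor{\gamma}$-guarantee of $\cA$ bounds $w(\OPTI)$ by $\factor{\gamma}\,w(I)$, and \Cref{lem:wmbg} closes the chain. Your explicit check that $e\notin I\cup S$ when $w(e)$ is assigned is a point the paper leaves implicit, so your write-up is in fact slightly more careful; note, though, that since the lemma concerns a one-pass algorithm we have $P=\emptyset$ and the initialization-loop case is vacuous.
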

\begin{proof}
  We can bound $f(\OPTI)$ as follows.
  \begin{align*}
    f(\OPTI)
    &\le \sum_{e\in\OPTI} f(\{e\}) &&\text{by submodularity of $f$}\\
    &\le\frac{1}{1-\curv(f)}\sum_{e\in\OPTI}(f(I_e\cup S_e +e)-f(I_e\cup S_e))
      &&\text{by definition of curvature}\\
    &=\frac{w(\OPTI)}{1-\curv(f)} &&\\
    &\le\frac{w(I)\factor{\gamma}}{1-\curv(f)}
      &&\text{by $\factor{\gamma}$-approximation guarantee of $\cA$}\\
    &\le\frac{f(I)\factor{\gamma}}{1-\curv(f)}
      &&\text{by \Cref{lem:wmbg}.}
  \end{align*}
  Thus the $f$-extension of $\cA$ achieves an approximation ratio of at most
  $\factor{\gamma}/(1-\curv(f))$. Combining this with~\Cref{lem:generic} completes 
  the proof.
\end{proof}

\begin{proof}[\textbf{Proof of \Cref{thm:curvature}}]
  We appeal to \Cref{lem:curv}. 
  Recall the expression for $\factor{\gamma}$ for Zelke's algorithm, given in
  the proof of~\Cref{thm:onepass} above.  For $f$-MSM, the claimed
  approximation ratio follows by picking the better of the two solutions
  obtained by running two $f$-extensions of Zelke's algorithm in parallel: one
  with $\gamma = 0.717$, which minimizes $\factor{\gamma}$, and another with
  $\gamma = 1$, which minimizes $\factor{\gamma}+1+1/\gamma$. 

  A similar idea applied to the appropriate compliant algorithms (outlined
  in~\Cref{sec:hypg,sec:matroid}) gives the claimed results for $f$-MSIS on
  $p$-hypergraphs and $p$-intersection systems.

  Note that, in all of these cases, we can avoid having to run two parallel
  $f$-extensions if we knew $\curv(f)$ in advance, for we could then simply
  figure out which value of $\gamma$ gives the better approximation ratio. For
  instance, in the case of $f$-MSM, we would pick $\gamma = 1$ if
  $5.585/(1-\curv(f)) \ge 7.75$ and $\gamma = 0.717$ otherwise.
\end{proof}

\section{A Multi-Pass MSM Algorithm} \label{sec:multipass}

In this section we prove \Cref{thm:multipass}.  For this we first review
McGregor's multi-pass MWM algorithm~\cite{mwms}, which is \spcl.  Our
algorithm is simply its $f$-extension, as explained in \Cref{sec:generic}.

To describe McGregor's algorithm with respect to our framework
(\Cref{alg:generic}), we need only explain the two choices made inside
\Call{Process-Edge}{}. These are especially simple. The algorithm never
creates any shadow edges, so \Cref{alg:generic:updateS} always chooses $S =
\emptyset$. In \Cref{alg:generic:improve}, the augmenting pair $(A,J)$ is
chosen so that $A = \{e\}$ if possible, and $A = \emptyset$ otherwise, and $J
= M \intedges A$.  Recall that $M \intedges A$ denotes the set of edges in
matching $M$ that share a vertex with some edge in set $A$.  This describes a
single pass. The overall algorithm starts with an empty matching and
repeatedly invokes $\Call{Improve-Matching}{}$ with $\gamma = 1/\sqrt2$ for
the first pass and $\gamma = 2\eps/3$ for the remaining passes. It stops when
the multiplicative improvement made in a pass drops below a certain
well-chosen rational function of $\gamma$. McGregor analyzes this algorithm to
show that it makes at most $O(\eps^{-3})$ passes and terminates with a
$(2+2\eps)$-approximate MWM.

In our $f$-extension, we make the following tweaks to the parameter $\gamma$:
we use $\gamma = 1$ for the first pass and $\gamma = \eps/3$ for the remaining
passes.  For the reader's convenience, we lay out the logic of the resulting
$f$-MSM algorithm explicitly in \Cref{alg:multipass}. The function
$\Call{Improve-Matching}{}$ is exactly as in \Cref{alg:generic} except that it
calls $\Call{Process-Edge}{e,M}$, since $S$ is never used. 

\begin{algorithm}[!ht]
  \caption{~~Multi-Pass Algorithm for $f$-MSM \label{alg:multipass}}
  \begin{algorithmic}[1]
    \Function{Multi-Pass-MSM}{$\sigma$}
    \State $M\gets$ \Call{Improve-Matching}{$\sigma,\emptyset,1$}
	\label{alg:multipass:first}
	\Comment{{\footnotesize See \Cref{alg:generic}. Obtains $8$-approximate 
      $f$-MSM.}}
    \State $\gamma \gets \eps/3,~
	\kappa \gets {\gamma^3}/(2+3\gamma+\gamma^2-\gamma^3)$ 
	\label{alg:multipass:kdef}
    \Repeat
	\State $w_{\mathrm{prev}} \gets f(M)$
	\State $M\gets \Call{Improve-Matching}{\sigma,M,\gamma}$ 
	\label{alg:multipass:impsol}
	\Until{$w(M)/w_{\mathrm{prev}}\le 1+\kappa$} \label{alg:multipass:stop}
    \State \Return{$M$}
    \EndFunction
    \Statex
    \Procedure{Process-Edge}{$e,M$}
	\Comment{{\footnotesize Compare with \Cref{alg:generic}.}}
    \State $w(e) \gets f(M + e) - f(M)$ \label{alg:multipass:wdef}
    \If {$w(e) \ge (1+\gamma)w(M \intedges \{e\})$} 
    \label{alg:multipass:choose-edge}
	\State $M \gets M \setminus (M \intedges \{e\}) + e$ 
	\label{alg:multipass:updateM}
    \EndIf
    \EndProcedure
    \end{algorithmic}
\end{algorithm} 

Let $\match{i}$ denote the matching $M$ computed by \Cref{alg:multipass} at
the end of its $i$th pass over $\sigma$.  When an edge $e$ is added to $M$ in
\Cref{alg:multipass:updateM}, we say that $e$ is {\em born} and that it {\em
kills} the (at most two) edges in $M \intedges \{e\}$. Notice that during pass
$i > 1$, thanks to the {\em pretend stream order} in which edges are
processed, initially all edges in $\match{i-1}$ are born without killing
anybody%
\footnote{This subtlety appears to have been missed in McGregor's
analysis~\cite{mwms} and it creates a gap in his argument. Using a pretend
stream order as we do in this work fixes that gap.}
(cf.~the discussion at the start of \Cref{sec:generic}); for the rest of the
pass these edges are never considered for addition to $M$.

Let $\killed{i}$ denote the set of edges killed during pass $i$ (some of them
may be born during a subsequent pass). Then $\match{i} \cup \killed{i}$ is
exactly the set of edges born in pass $i$. These edges can be made the
nodes of a collection of disjoint rooted {\em killing trees}%
\footnote{Feigenbaum \etal \cite{fgnbm} and McGregor \cite{mwms} used the
evocative term ``trail of the dead'' for this concept.}
where the parent of a killed edge $e$ is the edge $e'$ that killed it. The
set of roots of these killing trees is precisely $\match{i}$. Let
$\trail{i}(e)$ denote the set of strict descendants of $e \in \match{i}$ in
its killing tree. Then $\killed{i} = \bigcup_{e\in\match{i}} \trail{i}(e)$.

Let $\both{i} = \match{i} \cap \match{i-1}$ denote the set of edges that pass
$i$ retains in the matching from the previous pass. By the preceding
discussion, it follows that $\trail{i}(e) = \emptyset$ for all $e \in
\both{i}$.

\subsection{Analysis}

We now analyze \Cref{alg:multipass}. As before, let $\OPT$ denote an optimal
solution to the $f$-MSM instance. We first prove an approximation guarantee
for the first pass. It is not the best possible one-pass result (see
\Cref{thm:onepass}), but an $O(1)$-approximation suffices, so we can use the
simpler algorithm.

\begin{lemma} \label{lem:multipass-first}
  The matching $\match{1}$ is an $8$-approximate $f$-MSM, i.e., $f(\match{1})
  \ge f(\OPT)/8$.
\end{lemma}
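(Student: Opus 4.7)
The plan is to recognize that the first pass of \Cref{alg:multipass} is exactly the $f$-extension (with $\gamma = 1$) of McGregor's single-pass MWM algorithm, and then invoke \Cref{lem:generic}. First I would verify that McGregor's one-pass MWM procedure is \spcl in the sense of \Cref{def:spcl}: the update rule in \Cref{alg:multipass:choose-edge,alg:multipass:updateM} matches \Cref{alg:generic:improve,alg:generic:updateM} of \Cref{alg:generic} with the augmenting pair taken as $A = \{e\}$, $J = M \intedges \{e\}$ (when the inequality holds) or $A = J = \emptyset$ otherwise, and with $S = \emptyset$ maintained throughout. With $\prevmatch = \emptyset$ the initialization loop is vacuous, so the first pass is literally an instance of \Cref{alg:generic} with the required well-chosen pair satisfying $w(A) \ge (1+\gamma) w(J)$.

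Next I would recall (or re-derive from the charging argument of \cite{mwms}) that this MWM algorithm achieves a $\factor{\gamma}$-approximation with
\[
  \factor{\gamma} = 3 + 2\gamma + 1/\gamma \, ,
\]
minimized to $3 + 2\sqrt{2} \approx 5.828$ at $\gamma = 1/\sqrt{2}$. The derivation is the standard one: for each edge $e \in \OPT \setminus M$ we charge $w(e)$ to the edges blocking its insertion (each of weight at least $w(e)/(1+\gamma)$), and each matching edge can block at most two $\OPT$-edges, giving $w(\OPT \setminus M) \le 2(1+\gamma) w(M \cup K)$; combined with the trivial $w(\OPT \cap M) \le w(M)$ and $w(K) \le w(M)/\gamma$ (the analogue of \Cref{lem:btrailg}), this yields the claimed bound on $w(\OPT)/w(M)$.

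Finally, applying \Cref{lem:generic} to this \spcl algorithm with parameter $\gamma = 1$ gives an approximation ratio of
\[
  \factor{1} + 1 + 1/1 = (3 + 2 + 1) + 2 = 8
\]
for the $f$-extension, so $f(\match{1}) \ge f(\OPT)/8$. The only delicate point is confirming that the $\gamma = 1$ run indeed beats the other candidates when the $1 + 1/\gamma$ overhead from \Cref{lem:generic} is included (i.e., that we should not re-optimize $\gamma$ here); a short calculation of $4 + 2\gamma + 2/\gamma$ shows the minimum is at $\gamma = 1$, confirming the choice made in \Cref{alg:multipass:first}. No other obstacle arises, since the weight and curvature lemmas of \Cref{sec:onepass} are already proved at this level of generality.
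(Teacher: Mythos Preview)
Your proposal is correct and follows essentially the same route as the paper: recognize the first pass as the $f$-extension of McGregor's \spcl one-pass MWM algorithm, cite its approximation factor $\factor{\gamma} = 1/\gamma + 3 + 2\gamma$ from~\cite{mwms}, apply \Cref{lem:generic} to obtain $\factor{\gamma} + 1 + 1/\gamma = 4 + 2\gamma + 2/\gamma$, and evaluate at the minimizer $\gamma = 1$ to get $8$. One small caveat: your informal re-derivation of $\factor{\gamma}$, taken literally, only gives $w(\OPT) \le (5 + 2\gamma + 2/\gamma)\,w(M)$, since without the redistribution step every edge of $M\cup K$ could absorb charge from two $\OPT$-edges; the tight bound requires that a killed edge pass one of its two charge ``slots'' to its killer, so edges in $K$ end up carrying at most $(1+\gamma)w(e)$ rather than $2(1+\gamma)w(e)$---but since you ultimately cite the correct $\factor{\gamma}$ from~\cite{mwms}, this does not affect the validity of your argument.
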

\begin{proof}
  The first pass of the algorithm is a one-pass \spcl algorithm. As shown by
  McGregor~\cite[Lemma 3]{mwms}, its approximation factor is $\factor{\gamma}
  = 1/\gamma + 3 + 2\gamma$. Applying \Cref{lem:generic}, we have
  $f(\OPT)/f(\match{1}) \le 2/\gamma + 4 + 2\gamma$. This bound is minimized
  at $\gamma = 1$ (explaining the choice made in \Cref{alg:multipass:first})
  at which point it evaluates to $8$.
\end{proof}

Define $\tau$ to be the number of passes made by \Cref{alg:multipass}. Let
$w_i(e)$ denote the weight assigned to edge $e$ in \Cref{alg:multipass:wdef}
during the $i$th pass. For the rest of this section, $\gamma$ denotes the
parameter value used by passes $2$ through $\tau$, and $\kappa$ denotes the
corresponding value assigned at \Cref{alg:multipass:kdef}. To analyze the
result of those passes, we first borrow three results---stated in the next
three lemmas---from McGregor's analysis~\cite[Lemma~3 and Theorem~3]{mwms},
which in turn borrows from the Feigenbaum \etal
analysis~\cite[Theorem~2]{fgnbm}.

\begin{lemma}\label{lem:btrailmp}
  For all $i \in [2,\tau]$ and all $e\in \match{i}$, we have
  $w_i(\trail{i}(e)) \le w_i(e)/\gamma$.
\end{lemma}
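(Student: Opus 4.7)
The plan is to prove this by structural induction on the killing tree rooted at $e$, invoking the birth condition from \Cref{alg:multipass:choose-edge} of \Cref{alg:multipass}. The key observation is that when any edge $e'$ is born during pass $i$, it enters $M$ precisely because $w_i(e') \ge (1+\gamma)\, w_i(M \intedges \{e'\})$, and the set $M \intedges \{e'\}$ at that moment is exactly the set of $e'$'s children in the killing tree of pass $i$. Writing $c_1, \ldots, c_k$ (with $k \le 2$) for those children,
\[
  \sum_{j=1}^k w_i(c_j) \;\le\; \frac{w_i(e')}{1+\gamma} \, .
\]

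Before invoking this, I would pause to record one preliminary: edges loaded during the initial loop of $\Call{Improve-Matching}{}$ kill nobody, since $\match{i-1}$ is itself a matching, and so $M \intedges \{e\} = \emptyset$ throughout the initial loop. In particular, every edge $e \in \both{i}$ is a leaf of its killing tree with $\trail{i}(e) = \emptyset$, making the claim trivial; this handles the base case of the induction, together with any other leaf of the tree.

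For the inductive step on a node $e'$ with children $c_1, \ldots, c_k$, assume $w_i(\trail{i}(c_j)) \le w_i(c_j)/\gamma$ for each $j$. Writing $\trail{i}(e')$ as the disjoint union $\{c_1,\ldots,c_k\} \cup \bigcup_j \trail{i}(c_j)$ and applying the inductive hypothesis followed by the birth condition,
\[
  w_i(\trail{i}(e')) \;=\; \sum_{j=1}^k \bigl(w_i(c_j) + w_i(\trail{i}(c_j))\bigr) \;\le\; \frac{1+\gamma}{\gamma}\sum_{j=1}^k w_i(c_j) \;\le\; \frac{1+\gamma}{\gamma}\cdot\frac{w_i(e')}{1+\gamma} \;=\; \frac{w_i(e')}{\gamma} \, ,
\]
and specializing to a root $e \in \match{i}$ yields the lemma.

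The main thing to be careful about is not a clever combinatorial twist but rather the indexing of weights. Across passes, each edge's weight is reassigned at \Cref{alg:multipass:wdef} and can change, since $f(M+e)-f(M)$ depends on the current $M$; so the induction must stay entirely within pass $i$. Concretely, the $w_i(c_j)$ appearing in the birth condition for $e'$ is the value assigned when $c_j$ was processed in pass $i$, and it is the same value that feeds the inductive hypothesis on $\trail{i}(c_j)$. Once this bookkeeping is in place, the argument is simply the natural adaptation of McGregor's MWM analysis to our $f$-extension, as the weight-based birth rule is structurally unchanged.
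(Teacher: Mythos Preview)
Your proof is correct and relies on exactly the same ingredient as the paper's: the birth condition $w_i(e') \ge (1+\gamma)\,w_i(M\intedges\{e'\})$ applied at each node of the killing tree. The paper simply says ``directly analogous to \Cref{lem:btrailg},'' which amounts to a telescoping sum over the tree (each birth raises the weight of the surviving set by at least $\gamma$ times the weight of what it kills, so $w_i(e) - w_i(\text{leaves}) \ge \gamma\, w_i(\trail{i}(e))$), whereas you unroll the same inequality as a structural induction; the two presentations are equivalent.
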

\begin{proof}
  Directly analogous to \Cref{lem:btrailg}.
\end{proof}

\begin{lemma} \label{lem:lbbi}
  We have $w_\tau(\both{\tau}) / w_\tau(\match{\tau}) \ge (\gamma-\kappa) / 
  (\gamma+\gamma\kappa)$.
\end{lemma}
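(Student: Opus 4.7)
The plan is to exploit the stopping criterion of \Cref{alg:multipass:stop} together with the augmentation threshold enforced at \Cref{alg:multipass:choose-edge}. The argument runs in three short steps: first pin down what the stopping condition says under the pass-$\tau$ weight assignment, then telescope the augmentation inequality over the main loop of pass~$\tau$, and finally do one line of algebra.

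First I would observe that, by the same telescoping calculation used in the proof of \Cref{lem:wmbg} applied to the initial loop of pass~$\tau$ (\Cref{alg:generic:initloop}), we have $w_\tau(\match{\tau-1}) = f(\match{\tau-1})$. In particular, the value of $w_{\mathrm{prev}}$ at the start of the outer-loop iteration for pass~$\tau$ equals $w_\tau(\match{\tau-1})$, so the stopping test at line~\ref{alg:multipass:stop} firing on pass $\tau$ reads exactly
\[
  w_\tau(\match{\tau}) \le (1+\kappa)\, w_\tau(\match{\tau-1}) \, .
\]
Next, for every edge $e$ added to $M$ during the main loop, the acceptance test at \Cref{alg:multipass:choose-edge} enforces $w_\tau(e) - w_\tau(M \intedges \{e\}) \ge \gamma\, w_\tau(M \intedges \{e\})$, i.e., each augmentation increases the current matching's $w_\tau$-weight by at least $\gamma$ times the weight of the parents it kills. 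Summing over all main-loop augmentations and using that each killed edge appears as a parent in exactly one augmentation (no edge re-enters) and that the $w_\tau$-value of an edge is frozen the moment it is first processed, this telescopes to
\[
  w_\tau(\match{\tau}) - w_\tau(\match{\tau-1}) \ge \gamma\, w_\tau(\killed{\tau}) \, .
\]

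Combining these two inequalities yields $w_\tau(\killed{\tau}) \le (\kappa/\gamma)\, w_\tau(\match{\tau-1})$. Since $L := \match{\tau-1}\setminus \both{\tau} \subseteq \killed{\tau}$, the same bound holds for $w_\tau(L)$. Writing $w_\tau(\match{\tau-1}) = w_\tau(\both{\tau}) + w_\tau(L)$ and solving for $w_\tau(L)$---valid because $\gamma > \kappa$ for the algorithm's choice $\kappa = \gamma^3/(2+3\gamma+\gamma^2-\gamma^3)$---gives $w_\tau(L) \le (\kappa/(\gamma-\kappa))\, w_\tau(\both{\tau})$. Substituting back into the stopping inequality then produces
\[
  w_\tau(\match{\tau}) \le \frac{\gamma(1+\kappa)}{\gamma-\kappa}\, w_\tau(\both{\tau}) \, ,
\]
which rearranges to the claimed ratio $w_\tau(\both{\tau})/w_\tau(\match{\tau}) \ge (\gamma-\kappa)/(\gamma+\gamma\kappa)$.

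The one point that needs care is the telescoping step: one must verify that the weights $w_\tau(e)$ assigned at \Cref{alg:multipass:wdef} are fixed at the moment of assignment and never updated later, so that the per-augmentation gains $w_\tau(e) - w_\tau(M \intedges \{e\})$ genuinely add up to the end-of-pass total $w_\tau(\match{\tau}) - w_\tau(\match{\tau-1})$. Once this bookkeeping is in place, the proof does not even need to invoke \Cref{lem:btrailmp}; the global weight-gain inequality above is strong enough by itself.
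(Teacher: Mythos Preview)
Your argument is correct and follows essentially the same route as the paper's sketch. Both hinge on the two ingredients you identify: (i) the telescoping observation $w_\tau(\match{\tau-1}) = f(\match{\tau-1})$, which makes the stopping criterion read $w_\tau(\match{\tau}) \le (1+\kappa)\,w_\tau(\match{\tau-1})$; and (ii) the per-augmentation gain inequality summed over the main loop. The paper states the latter directly as $w_\tau(\match{\tau}\setminus\both{\tau}) \ge (1+\gamma)\,w_\tau(\match{\tau-1}\setminus\both{\tau})$, whereas you derive the slightly stronger intermediate bound $w_\tau(\match{\tau}) - w_\tau(\match{\tau-1}) \ge \gamma\,w_\tau(\killed{\tau})$ and then specialize via $\match{\tau-1}\setminus\both{\tau} \subseteq \killed{\tau}$; subtracting $w_\tau(\both{\tau})$ from both sides recovers the paper's inequality exactly. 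The remaining algebra is the same, just arranged differently.
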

\begin{proof}[Proof sketch]
  The logic in \Crefrange{alg:multipass:choose-edge}{alg:multipass:updateM}
  ensures that, for all $i \in [2,\tau]$, we have $w_i(\match{i} \setminus
  \both{i}) \ge (1+\gamma) w_i(\match{i-1} \setminus \both{i})$. In
  particular, this inequality holds at $i = \tau$.

  During the initial phase of pass $i \ge 2$, the set $M$ is {\em 
  monotonically}
  built up from $\emptyset$ to $\match{i-1}$ according to a pretend stream
  order and weights are assigned to edges in $\match{i-1}$ according to
  \Cref{alg:multipass:wdef}. Because of this monotonicity, summing the weights
  of these edges causes the $f$ terms to telescope, giving $w_i(\match{i-1}) =  
  f(\match{i-1})$.
  So the stopping criterion in \Cref{alg:multipass:stop} ensures that
  $w_\tau(\match{\tau}) / w_\tau(\match{\tau-1}) \le 1+\kappa$. Combining this
  with the inequality in the last paragraph (at $i = \tau$) yields the lemma 
  after some straightforward algebra.
\end{proof}

\begin{lemma} \label{lem:charging}
  For all $i \in [2,\tau]$, we have $w_i(\OPT) \le (1+\gamma) \sum_{e\in 
    \match{i}} (w_i(\trail{i}(e)) + 2 w_i(e))$.
\end{lemma}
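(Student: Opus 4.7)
The plan is to apply the charging argument of Feigenbaum \etal{} and McGregor in the setting of our per-pass weights $w_i$. I would partition $\OPT = \OPT^{\text{in}} \sqcup \OPT^{\text{out}}$, where $\OPT^{\text{in}} := \OPT \cap (\match{i} \cup \killed{i})$ consists of the OPT edges that were added to the matching at some point during pass $i$ (each lying either in $\match{i}$ or in some $\trail{i}(e)$), and $\OPT^{\text{out}} := \OPT \setminus \OPT^{\text{in}}$ consists of those processed but rejected. The direct contribution of $\OPT^{\text{in}}$ to $w_i(\OPT)$ is bounded by $w_i(\match{i}) + \sum_{e \in \match{i}} w_i(\trail{i}(e))$, which fits comfortably inside the right-hand side of the claim.

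For each $e^* \in \OPT^{\text{out}}$, the rejection test in \Cref{alg:multipass:choose-edge} gives $w_i(e^*) \le (1+\gamma) w_i(M_{t(e^*)} \intedges \{e^*\})$, where $M_{t(e^*)}$ denotes the matching at the moment $e^*$ was processed. The at-most-two conflicting edges in $M_{t(e^*)}$ lie in $\match{i} \cup \killed{i}$, since they were born by that time. I would charge $w_i(e^*)$ to these conflicting edges, and then propagate any charge deposited on a killed edge upward through the killing forest. Each upward step uses the kill inequality $w_i(\text{parent}) \ge (1+\gamma) w_i(\text{child})$, enforced by \Cref{alg:multipass:choose-edge} at the moment the parent was born, which absorbs a factor of $(1+\gamma)$ into the new host's weight. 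Because $\OPT$ is a matching and each matching edge has two endpoints, the total charge that can arrive at any root $e \in \match{i}$ ends up bounded by $2(1+\gamma) w_i(e)$, while the residual charge left on $\trail{i}(e)$ stays within $(1+\gamma) w_i(\trail{i}(e))$. Summing over roots yields the claim.

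The main obstacle is this asymmetric bookkeeping. A direct double-counting per endpoint only yields the weaker symmetric bound $(3+2\gamma)(w_i(\match{i}) + w_i(\killed{i}))$, with the same coefficient on both $\match{i}$ and $\killed{i}$. Obtaining the sharper form that the lemma demands, in which the trail terms carry only the factor $(1+\gamma)$ rather than $2(1+\gamma)$, requires routing the rejected-edge charges upward along the kill chains so that they concentrate at roots in $\match{i}$ instead of double-counting at killed hosts; the geometric growth of weights along each chain is the structural feature that makes this routing feasible.
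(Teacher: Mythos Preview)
Your plan follows the same high-level charging strategy as the paper, but the decomposition into $\OPT^{\text{in}}$ and $\OPT^{\text{out}}$ introduces an accounting gap. As written, you bound $w_i(\OPT^{\text{in}}) \le w_i(\match{i}) + w_i(\killed{i})$ and then separately argue that the $\OPT^{\text{out}}$ charges total at most $(1+\gamma)\sum_{e\in\match{i}}\bigl(w_i(\trail{i}(e)) + 2w_i(e)\bigr)$, which you say ``yields the claim.'' But that quantity is already the \emph{entire} right-hand side, so adding the $\OPT^{\text{in}}$ contribution on top overshoots. The paper sidesteps this by running \emph{all} optimal edges---born or not---through one charging mechanism: a born-and-killed edge $op$ charges its \emph{killer} at the shared vertex; a born-and-survived edge charges itself; a never-born edge charges its (at most two) blockers proportionally to their weights, again at the shared vertices. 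There is then no separate $\OPT^{\text{in}}$ term to add back.

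The step you flag as ``the main obstacle'' is precisely where the paper's argument has its content, and your description of it is not yet a well-defined scheme. Charges are not propagated upward wholesale; each charge is tied to a specific \emph{vertex} of its host edge, and when $xy$ is killed by $yz$, only the charge sitting at the shared vertex $y$ moves to $yz$, while the charge at $x$ stays on $xy$. Since $\OPT$ is a matching, at most one optimal edge touches any given vertex, so the charge at each endpoint of a host never exceeds $(1+\gamma)$ times the host's weight (using $w_i(yz)\ge w_i(xy)$ at each transfer). Thus a killed edge finishes with at most one endpoint's worth of charge, $(1+\gamma)w_i(e)$, and a surviving edge with at most two, $2(1+\gamma)w_i(e)$. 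Your phrasing (``propagate any charge deposited on a killed edge upward'' yet also ``residual charge left on $\trail{i}(e)$'') does not specify which charges move and which stay; the per-vertex association is exactly the missing rule that makes the asymmetric coefficients come out.
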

\begin{proof}[Proof sketch]
  This lemma has a rather creative proof, wherein the weights of edges in
  $\OPT$ are {\em charged} to edges in $\match{i} \cup \killed{i}$ using a
  careful charge transfer scheme.  For the sake of completeness we give a full 
  proof in \Cref{app:charging}.
\end{proof}

We are now ready to fully analyze the approximation guarantee and complexity
of \Cref{alg:multipass}, thereby proving \Cref{thm:multipass}.

\begin{proof}[\textbf{Proof of \Cref{thm:multipass}}]
  As noted earlier, $\trail{i}(e) = \emptyset$ for all $e \in \both{i}$ and 
  $\killed{i} = \bigcup_{e\in \match{i}} \trail{i}(e)$.
  Therefore, $\killed{\tau} = \bigcup_{e\in \match{\tau} \setminus \both{\tau}}
  \trail{\tau}(e)$, which gives
  \begin{equation}
    w_\tau(\killed{\tau})
    =   \sum_{e\in \match{\tau} \setminus \both{\tau}} w_\tau(\trail{\tau}(e))
    \le \sum_{e\in \match{\tau} \setminus \both{\tau}} \frac{w_\tau(e)}{\gamma}
    =   \frac{w_\tau(\match{\tau} \setminus \both{\tau})}{\gamma}
    =   \frac{w_\tau(\match{\tau})- w_\tau(\both{\tau})}{\gamma} \, ,
    \label{eq:mp1}
  \end{equation}
  where the inequality follows from \Cref{lem:btrailmp}.  Using \Cref{eq:mp1} 
  and relating the first and third terms in \Cref{eq:mutb}, we get
  \begin{equation}
    f(\match{\tau}\cup \killed{\tau})
    \le w_\tau(\match{\tau}) + 
    \frac{w_\tau(\match{\tau})-w_\tau(\both{\tau})}{\gamma}
    =   \left(1+\frac{1}{\gamma}\right)w_\tau(\match{\tau})
	-\frac{1}{\gamma} w_\tau(\both{\tau}) \, .
    \label{eq:mp2}
  \end{equation}
  Using ~\Cref{lem:charging}, we now get
  \begin{align*}
    w_\tau(\OPT)
    &\le (1+\gamma) \sum_{e\in \match{\tau}} (w_\tau(\trail{\tau}(e)) + 2 
    w_\tau(e)) && \\
    &=   (1+\gamma)\left[\left(\sum_{e\in \match{\tau} \setminus \both{\tau}}
	 w_\tau(\trail{\tau}(e))\right)
	 + 2 w_\tau(\match{\tau})\right] &&\text{since $\forall e\in 
	   \both{\tau}$, we have $\trail{\tau}(e)=\emptyset$}\\
    &\le (1+\gamma)\left[\left(
	 \sum_{e\in \match{\tau} \setminus \both{\tau}}
	 \frac{w_\tau(e)}{\gamma}\right)
	 + 2 w_\tau(\match{\tau})\right]
	 &&\text{using \Cref{lem:btrailmp}}\\
    &=   \frac{1+\gamma}{\gamma}(w_\tau(\match{\tau})-w_\tau(\both{\tau}))+
	 (2+2\gamma) w_\tau(\match{\tau})\\
    &=   \left(\frac{1}{\gamma}+3+2\gamma\right) w_\tau(\match{\tau})-
	 \left(1+\frac{1}{\gamma}\right) w_\tau(\both{\tau}) \, .
  \end{align*}
  By using \Cref{eq:owtb}, we have
  $f(\OPT)\le f(\match{\tau}\cup \killed{\tau}) + w_\tau(\OPT)$. So
  using \Cref{eq:mp2} we get
  \begin{align*}
    f(\OPT)
    &\le \left(1+\frac{1}{\gamma}\right)w_\tau(\match{\tau})
	 - \frac{1}{\gamma} w_\tau(\both{\tau})
	 + \left(\frac{1}{\gamma}+3+2\gamma\right) 
	 \mathrlap{w_\tau(\match{\tau})
	 - \left(1+\frac{1}{\gamma}\right) w_\tau(\both{\tau})}\\
    &=   
    w_\tau(\match{\tau})\left[1+\frac{1}{\gamma}+\frac{1}{\gamma}+3+2\gamma\right]
	 - w_\tau(\both{\tau})\left[1+\frac{2}{\gamma}\right]\\
    &\le \left[4+\frac{2}{\gamma}+2\gamma - \left(1+\frac{2}{\gamma}\right)
	 \frac{\gamma-\kappa}{\gamma+\gamma\kappa} \right]w_\tau(\match{\tau}) 
	 && \text{using \Cref{lem:lbbi}}\\
    &=   (3+3\gamma) w_\tau(\match{\tau}) && \text{substituting $\kappa 
      =\frac{\gamma^3}{2+3\gamma+\gamma^2-\gamma^3}$},\\
    &\le (3+\eps)f(\match{\tau}) && \text{using \Cref{lem:wmbg}}
  \end{align*}
  and this completes the proof that \Cref{alg:multipass} computes a
  $(3+\eps)$-approximate $f$-MSM.

  Finally we bound the number of passes made by the algorithm. This requires
  some care. It is not as immediate as the corresponding analysis in
  McGregor's MWM algorithm, because the weight function $w_i$ changes from
  pass to pass. We proceed as follows.

  For all $i \in [2, \tau-1]$, we have 
  \[
    f(\match{i})
    \ge w_i(\match{i})              
    > (1+\kappa)f(\match{i-1}) \, , 
  \]
  where the first inequality uses \Cref{lem:wmbg}, and the second uses the
  stopping criterion in \Cref{alg:multipass:stop}. Since $\OPT$ is optimal,
  $f(\OPT)\ge f(\match{i})$, and repeated application of the above inequality
  gives us
  \[
    f(\OPT)
    \ge f(\match{i})
    > (1+\kappa)^{i-1} f(\match{1})
    \ge (1+\kappa)^{i-1} f(\OPT)/8 \, ,
  \]
  where the final step uses \Cref{lem:multipass-first}. Applying this at $i =
  \tau-1$ gives $\tau \le 2 + \log_{1+\kappa} 8 = O(\kappa^{-1}) =
  O(\gamma^{-3}) = O(\eps^{-3})$. Thus, the algorithm finishes in
  $O(\eps^{-3})$ passes, as claimed.
\end{proof}



\section{Generalization to Matchings in Hypergraphs} \label{sec:hypg}

A hypergraph is a pair $H=(V,E)$, where $V$ is a finite set and $E$ is a
collection of subsets of $V$. It is a $p$-hypergraph if $|e| \le p$ for all
$e\in E$.  A matching in $H$ is a pairwise disjoint subcollection of $E$.
McGregor's one-pass and multi-pass algorithms for MWM and its $f$-extensions
we gave above can be generalized to compute approximate MWMs and $f$-MSMs in
$p$-hypergraphs. For MWM in $p$-hypergraphs, we get approximation ratios of
$2(p+\sqrt{p(p-1)})-1$ in one pass and $p+\eps$ with $O(\eps^{-3}\log p)$ 
passes.
For MSM, the respective approximation ratios we obtain are $4p$ and
$p+1+\eps$.
We define $n:=|V|$ and $m:=|E|\le n^p$, so the space usage is $O(n\log n)$.

To obtain these results, we generalize the charging scheme
alluded to in the proof of \Cref{lem:charging}. We argue that weights
of edges in $\OPT$ can be charged and these charges redistributed such that
\begin{equation}
  w(\OPT)\leqslant \sum_{e\in M}(1+\gamma)\left(
  (p-1)w(T(e))+pw(e)\right)\ ,
  \label{eq:redebh}
\end{equation}
where $T(e)$ is the set of non-roots in the killing tree of $e$.
This argument appears in \Cref{app:charging}.

The rest of the analysis is identical in all four cases $\{$MWM, MSM$\} \times
\{$one-pass, multi-pass$\}$.
In the multi-pass algorithms, we have to set values of $\gamma$ and $\kappa$.
We use $\gamma=\eps/(p+1)$ for both MSM and MWM.
We use $\kappa= \gamma^3/((p-1)(1+\gamma)^2-\gamma^3)$ for MWM and
$\kappa= \gamma^3/(p+(2p-1)\gamma +(p-1)\gamma^2-\gamma^3)$ for MSM
to get the desired approximation ratios for \Cref{thm:mwis}.

To get better approximation ratios in terms of $\curv(f)$, as stated in
\Cref{thm:curvature}, we again appeal to \Cref{lem:curv}, applying it to the
compliant one-pass hypergraph MWM algorithm. We can then use the same idea as
in \Cref{sec:onepass}, taking the better of two values of $\gamma$, to obtain
a one-pass approximation ratio of
$\min\{4p,(2(p+\sqrt{p(p-1)})-1)/(1-\curv(f))\}$.

\section{Maximization Over (Multiple) Matroids} \label{sec:matroid}

A \emph{matroid} is a pair $M=(E,\IC)$ such that $E$
is a finite set, $\IC$ is a collection of subsets of $E$, and the following
conditions hold:
\begin{inparaenum}[(1)\,]
  \item $\emptyset \in \IC$;~
  \item if $I\in \IC$ and $J\subseteq I$, then $J\in \IC$ (in
    other words, $\IC$ is closed under the subset operation);~
  \item if $I,J \in \IC$ and $|J|<|I|$, then there exists an element
    $e\in I-J$ such that $J+e\in \IC$.
\end{inparaenum}
A set $I\in \IC$ is also called an \emph{independent set}. A maximally 
independent set is a \emph{base} of the matroid. A set that is not independent 
is \emph{dependent}. A minimally dependent set is a \emph{circuit}.
Since the number of independent sets in a matroid can be exponential, we assume 
that access to $\IC$ is via an oracle. When a set $I$ is passed to this oracle, 
it returns an empty set if $I$ is independent, otherwise it returns a circuit 
in $I$.

Given $p$ matroids $M_1=(E,\IC_1),\ldots,M_p=(E,\IC_p)$ over the same ground 
set $E$, and a nonnegative monotone proper submodular function $f:2^E\to\RR_+$, 
we consider the problem of finding $\argmax_{I\in\IC_1\cap\cdots\cap\IC_p} 
f(I)$, in the streaming model, where elements of $E$ arrive in the stream. We 
define $m:=|E|$ and $n:=\max_{I\in \IC_1\cap\cdots\cap\IC_p}|I|$. Then our 
algorithms use $O(n(\log m)^{O(1)})$ memory.

Ashwinkumar~\cite{ashwin} gave a one-pass
$(2(p+\sqrt{p(p-1)})-1)$-approximation algorithm (and a matching lower bound
when the algorithm is only allowed to store a feasible solution) when $f$ is
modular. His algorithm, which we explain in \Cref{app:ashwin}, is \spcl.
Hence, by \Cref{lem:generic}, we get an approximation ratio of $4p$ when $f$
is submodular (by setting $\gamma=1$).  Now, we seek to improve this ratio 
using multiple passes.  It
is not clear to us if we can extend Ashwinkumar's charging scheme to argue
that his algorithm can be used in \Cref{alg:multipass:impsol} of
\Cref{alg:multipass}. So we give a simpler charging scheme for partition
matroids in \Cref{app:ashwin} to give better approximation ratios using
multiple passes. We get the approximation ratios $p+\eps$ and $p+1+\eps$ for
the modular and submodular cases, respectively, by setting $\gamma=\eps/(p+1)$
for both cases, and $\kappa= \gamma^3/((p-1)(1+\gamma)^2-\gamma^3)$ for the
modular case and $\kappa= \gamma^3/(p+(2p-1)\gamma +(p-1)\gamma^2-\gamma^3)$
for the submodular case, in \Cref{alg:multipass}. 

The better, curvature-dependent approximation ratio of \Cref{thm:curvature}
can be obtained using the same idea as for $f$-MSM, by appealing to
\Cref{lem:curv} and applying it to Ashwinkumar's one-pass algorithm, which, as
we have noted, is compliant.


{\small
  \bibliographystyle{plain}
  \bibliography{references}
}

\appendix

\section{Details of the Charging Schemes}\label{app:charging}
We give a self-contained proof of \Cref{lem:charging} for the sake of
completeness. We also give a generalization of the charging scheme for 
hypergraphs.

We use the same terminology and notation as in previous sections, but we also 
denote an edge $\{u,v\}\in E$ as $uv$, for convenience. Matching $\OPT$ can be 
thought of as any matching, not just an optimum matching, for this argument.
We can charge the weight $w(\OPT)$ to the edges in $M\cup \kkk$ in such a way 
that each edge $e\in \kkk$ gets at most $(1+\gamma)w(e)$ charge and each edge 
$e\in M$ gets at most $2(1+\gamma)w(e)$ charge. This is essentially the same 
charging and charge-redistribution scheme as in Feigenbaum \etal \cite{fgnbm}.

Let $\OPT=\{o_1p_1,o_2p_2,\ldots\}$. Consider an edge $op\in \OPT$. If $op$ was 
born and killed later by an edge $pq$, then we charge $w(op)$ to $pq$ and 
associate this charge with vertex $p$. If $op$ was born and not killed, then it 
charges itself $w(op)$. If $op$ was not born because of one or two edges 
$xo,py\in M\cup \kkk$ then we charge $\frac{w(op)w(xo)}{w(xo)+w(py)}$ to $xo$ 
and associate this charge with vertex $o$ and $\frac{w(op)w(py)}{w(xo)+w(py)}$ 
to $py$ and associate this charge with vertex $p$.
In all three cases the following holds.
\begin{obs}
  \label{obs:1}
  At most $(1+\gamma) w(xy)$ charge is associated with each of $x$ and $y$ for 
  an edge $xy\in M\cup \kkk$.
\end{obs}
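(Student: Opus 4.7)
The plan is to exploit the matching structure of $\OPT$ to argue that, for each endpoint $x$ of an edge $xy\in M\cup\kkk$, all the charge ``associated with $x$'' on $xy$ originates from at most one $\OPT$ edge. Since $\OPT$ is a matching, there is at most one edge $e_x\in\OPT$ incident to $x$, and inspection of the three sub-cases of the charging scheme shows that whenever an $\OPT$ edge deposits charge on an edge of $M\cup\kkk$, that charge is always associated with a vertex the two edges actually share. So only $e_x$ (if it exists) can contribute charge associated with $x$, and it suffices to bound its contribution by $(1+\gamma)w(xy)$; the symmetric argument then handles $y$.

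I would then bound the contribution of $e_x$ by invoking the birth/non-birth threshold of \Cref{alg:multipass:choose-edge}, namely that an edge $e$ is born iff $w(e)\ge(1+\gamma)w(M\intedges\{e\})$. If $e_x$ was born and later killed by $xy$, the killing condition applied to $xy$ forces $w(e_x)\le w(M\intedges\{xy\})\le w(xy)/(1+\gamma)$, which is well inside the target bound. If $xy\in\OPT$ itself was born and survived, so that it charges itself $w(xy)$, the bound is immediate since $w(xy)\le(1+\gamma)w(xy)$. The remaining case is when $e_x$ was prevented from being born by $xy$ together with (possibly) a second matching edge $oz$ incident to the other endpoint of $e_x$; here the non-birth condition gives $w(e_x)<(1+\gamma)(w(xy)+w(oz))$.

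The delicate point, which I expect to be the main technical step, is that the fractional ``$x$-charge'' deposited on $xy$ in this last sub-case is $w(e_x)\cdot w(xy)/(w(xy)+w(oz))$, and one must verify that this is at most $(1+\gamma)w(xy)$. This is precisely why the charging scheme splits the weight of a prevented $\OPT$ edge across the two preventing edges \emph{proportionally to their weights}: substituting the non-birth inequality causes the $(w(xy)+w(oz))$ factor to cancel, leaving $(1+\gamma)w(xy)$. Once each of the three sub-cases yields the bound $(1+\gamma)w(xy)$, combining with the uniqueness of $e_x$ at vertex $x$ (and analogously at $y$) completes the proof of the observation.
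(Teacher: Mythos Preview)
Your argument is correct and follows exactly the case analysis the paper has in mind: the paper itself does not spell out a proof of the observation beyond the one-line assertion ``In all three cases the following holds,'' and you have simply filled in those three cases using the birth/non-birth threshold of \Cref{alg:multipass:choose-edge} together with the fact that $\OPT$ is a matching, which is precisely the intended justification.
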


The redistribution is done as follows. Imagine this happening in the order the 
edges were processed: whenever an edge $xy$ is killed by an edge $yz$, then we 
transfer the charge associated with vertex $y$ from edge $xy$ to $yz$. Since 
$yz$ killed $xy$, we have $w(yz)\geqslant w(xy)$, hence using 
Observation~\ref{obs:1}, $yz$ is charged at most $(1+\gamma) w(xy)$ and this 
charge is associated with vertex $y$ after redistribution, which is at most 
$(1+\gamma) w(yz)$.

A charge associated with a vertex $v$ is from an edge, say $uv\in \OPT$. Hence, 
after this redistribution, each edge in $\kkk$ has charge associated with at 
most one vertex, and each edge in $M$ has charge associated with at most two 
vertices, and this completes the proof.

\subsection{Charging Scheme for Hypergraphs}

The charging and redistribution argument is generalized as follows. Recall that 
an edge $e$ in a hypergraph is a subset of the set of vertices, i.e., if $e\in 
E$, then $e\subseteq V$.

Here, we charge the weight $w(\OPT)$ to the edges in $M\cup \kkk$ such that 
each edge $e\in \kkk$ gets at most $(p-1)(1+\gamma)w(e)$ charge and each edge 
$e\in M$ gets at most $p(1+\gamma)w(e)$ charge.
Consider an edge $e^*\in \OPT$. If $e^*$ was born and killed later, then we 
charge $w(e^*)$ to its murderer $e^\dagger$ and associate this charge with 
vertices in $e^*\cap e^\dagger$. If $e^*$ was born and not killed, then it 
charges itself.  If $e^*$ is not born because of at most $p$ edges, say 
$e_1,\ldots, e_p$, then we charge $w(e_i)w(e^*)/\sum_i w(e_i)$ to edge $e_i$ 
and associate these charges with vertices $e^*\cap e_i$ correspondingly. In all 
the cases, an edge $e$ is charged at most $(1+\gamma)w(e)$ by a given edge in 
$\OPT$, and thus bears at most $p(1+\gamma)$ charge.

For redistribution, when an edge $e'$ is killed by $e^k$, we transfer all the 
charge associated with vertices $e'\cap e^k$ to $e^k$. For example, if an edge 
$e=\{a,b,c,d\}$ did not let edges $\{a,b\}$ and $\{c\}$ be born, then $e$ bears 
$w(\{a,b\})$ charge associated with $a$ and $b$, and $w(\{c\})$ charge 
associated with $c$. If in the future $\{a\}$ kills $e$, then we transfer 
$w(\{a,b\})$ 
charge to $\{a\}$, or if $\{a,c,x,y\}$ kills $e$, then we transfer 
$w(\{a,b\})+w(\{c\})$ charge from $e$ to $\{a,c,x,y\}$. Thus, an edge $e$ not 
in the final matching bears at most $(p-1)(1+\gamma)w(e)$ charge, and an edge 
$e$ in the final matching bears at most $p(1+\gamma)w(e)$ charge.  This gives 
us \Cref{eq:redebh}.

\section{Ashwinkumar's Algorithm and a New Charging Scheme}\label{app:ashwin}
We give Ashwinkumar's Algorithm for $p$ matroids and a simpler charging scheme 
for partition matroids. 

Let $(E^1,\ldots,E^r)$ be a partition of $E$, and $k^1,\ldots,k^r$ be positive 
integers, and
\[
  \IC=\{I\subseteq E : |I\cap E^j|\le k^j\text{ for all }j\in[r]\}\ ,
\]
then $(E,\IC)$ is a matroid, specifically, a \emph{partition matroid}. Also, a set 
that is not independent is \emph{dependent}, and a minimally dependent set is a 
\emph{circuit}. We say that a set $J\subseteq E$ \emph{saturates} $E^s$ if 
$|J\cap E^s|= k^s$.
We denote the partition of matroid $M_i$ by $E_i^1,\ldots$, and corresponding 
constraints by $k_i^1,\ldots$.
For an independent set $I\in \IC_i$ and an element $e\in E-I$ such that 
$I+e\notin \IC_i$, let $C_i(I+e)$ denote the unique circuit in $I+e$. Now, 
consider \Cref{alg:pmatroids}.

\algrenewcommand\algorithmicforall{\textbf{foreach}}
\begin{algorithm}[!ht]
  \caption{~~One-Pass Algorithm for Maximization Over Multiple 
    Matroids\label{alg:pmatroids}}
  \begin{algorithmic}[1]
    \Function{Find-Max-Weight-Ind-Set}{}
    \State $I \gets \emptyset$
    \ForAll{$e \in E$ in the order given by $\sigma$}
	\State $(e_1,\ldots,e_p) \gets$ smallest weight elements in 
	$C_1(I+e),\ldots,C_p(I+e)$ \label{alg:pmatroids:intelts}
	\If{$w(e)\ge (1+\gamma) w(\{e_1,\ldots,e_p\})$}
	\State $I \gets (I\setminus\{e_1,\ldots,e_p\})\cup\{e\}$
	\label{alg:pmatroids:updateI}
	\EndIf
    \EndFor
    \State \Return{$I$}
    \EndFunction
  \end{algorithmic}
\end{algorithm}

\subsection{A Simpler Charging Scheme For Partition Matroids}
Let $\OPTI$ be an optimal independent set and $I$ be the output. We reuse some 
of the terminology used in \Cref{sec:multipass}. When an element $e$ is added 
to $I$ in \Cref{alg:pmatroids:updateI}, we say that $e$ is {\em born} and that 
it {\em kills} the (at most $p$) elements $e_1,\ldots,e_p$. Let $I_e$ denote 
the maintained independent set just before $e$ was processed. Let 
$\kkk=(\bigcup_{e\in E}I_e)\setminus I$ denote the set of elements that were 
added to the maintained independent set at some point but did not make it to 
the final output. These elements can be made the nodes of a collection of 
disjoint rooted {\em killing trees} where the parent of a killed element $e$ is 
the edge $e'$ that killed it. The set of roots of these killing trees is 
precisely $I$. Let $\ttt(e)$ denote the set of strict descendants of $e \in I$ 
in its killing tree. Then $\kkk = \bigcup_{e\in I} \ttt(e)$.

We give a charging scheme that is based on the one mentioned in the proof of 
\Cref{lem:charging} but is slightly different from that used in case of 
hypergraphs. We charge responsible elements with respect to a matroid, i.e., if 
$e\in\OPTI$ charges $e'\in I_e$ because $I_e+e$ formed a circuit in some $M_i$, 
then we associate this charge to $e'$ with $M_i$. Now, if $e\in\OPTI$ is taken, 
then it charges itself and this charge is associated with all $p$ matroids.  
Suppose $e\in \OPTI$ was not taken because of the elements 
$e_1^{j_1},\ldots,e_p^{j_p}$ (see \Cref{alg:pmatroids:intelts}), where 
$j_1,\ldots,j_p$ denote the set indices in partitions of $M_1,\ldots,M_p$. Then 
for all $e'\in ((C_1(I_e+e)-e)\cup\cdots\cup(C_p(I_e+e)-e))$, we have 
$w(e)<(1+\gamma) w(e')$.  Also, $e\in E_i^{j_i}$ for $i\in[p]$ and some 
corresponding $j_i$.  Since $I_e$ saturates $E_i^{j_i}$ for all $i\in[p]$, 
there exist elements $e_1'\in (C_1(I_e+e)-e),\ldots,e_p'\in (C_p(I_e+e)-e)$, 
such that $e_1',\ldots,e_p'$ were uncharged with respect to $M_1,\ldots,M_p$, 
respectively, because for all $i\in[p]$, we have $|\OPTI\cap E_i^{j_i}|\le 
k_i^{j_i}$. So, for $i\in[p]$, we charge $w(e_i')w(e)/(\sum_{j=1}^p w(e_j'))$ 
to $e_i'$. In any case, an element $e'$ is charged at most $(1+\gamma)w(e')$ 
and by at most one optimal element with respect to a particular matroid, i.e., 
$p(1+\gamma)w(e')$ in total.

The charge redistribution is as follows. Whenever elements 
$e_1^{j_1},\ldots,e_p^{j_p}$ are killed by $e^k$, we transfer the charge on 
$e_1^{j_1},\ldots,e_p^{j_p}$ associated with $M_1,\ldots,M_p$, respectively, to 
$e^k$ keeping the same association, and the amount of charge transferred is at 
most $(1+\gamma)w(e^k)$. Note that after this transfer, if an $e^k$ carried a 
charge associated with some $M_i$, then it will not be charged again for that 
matroid.  Thus, after all the redistribution, an element $e'\in\kkk$ carries at 
most $(p-1)(1+\gamma)w(e')$ charge, and an element $e'\in I$ carries at most 
$p(1+\gamma)w(e')$ charge. So we get an inequality similar to \Cref{eq:redebh} 
for hypergraphs,
\begin{equation}
  w(\OPTI)\leqslant \sum_{e\in I}(1+\gamma)\left(
  (p-1)w(\ttt(e))+pw(e)\right)\ .
  \label{eq:redebpm}
\end{equation}

Now, \Cref{eq:redebpm} enables us to extend \Cref{alg:pmatroids} to the 
multi-pass version based on McGregor's algorithm for MWM, and its 
$f$-extension, i.e., \Cref{alg:multipass}, and we get the approximation ratios 
$p+\eps$ and $p+1+\eps$ for modular and submodular case, respectively, by 
setting $\gamma=\eps/(p+1)$ for both linear and submodular case, and
$\kappa= \gamma^3/((p-1)(1+\gamma)^2-\gamma^3)$ for linear case and
$\kappa= \gamma^3/(p+(2p-1)\gamma +(p-1)\gamma^2-\gamma^3)$ for submodular 
case.

\section{Zelke's Algorithm for MWM}\label{app:zelke}

As mentioned earlier, Zelke's algorithm is \spcl. Therefore, to describe its
logic in full, it suffices to explain what happens in
\Crefrange{alg:generic:improve}{alg:generic:updateS}.

For each edge $uv\in M$, the algorithm
stores at most two shadow edges associated with each of $u$ and $v$, denoted by 
$\shadow{uv}{u}$ and $\shadow{uv}{v}$. When an edge $y_1y_2$ arrives in the 
stream, it considers the following set of at most seven edges in the vicinity 
of $y_1y_2$, viz., $T=\{y_1y_2, g_1y_1, a_1g_1, a_1c_1, g_2y_2, 
a_2g_2,a_2c_2\}$, where
\begin{itemize}
  \item $g_1y_1, g_2y_2$ are edges in $M$ that intersect with $y_1y_2$,
  \item $a_1g_1=\shadow{g_1y_1}{g_1}$ and $a_2g_2=\shadow{g_2y_2}{g_2}$,
  \item $a_1c_1$ and $a_2c_2$ are the edges in $M$ that intersect with $a_1g_1$ 
    and $a_2g_2$, respectively.
\end{itemize}
For a set $A \subseteq E$ and a matching $M \subseteq E$, let $M \intedges A$ 
denote the set of edges in $M$ that share a vertex with some edge in $A$. The 
algorithm picks
an \emph{augmenting set} $A\subseteq T$ that is a matching and maximizes 
the difference $w(A)-(1+\gamma)w(M\intedges A)$. If this difference is
positive, then $A$ is chosen in \Cref{alg:generic:improve}, i.e.,
the matching $M$ is updated as follows:
\[
  M \gets (M \setminus (M \intedges A)) \cup A\ .
\]
Then the set of shadow edges $S$ is updated as
\[
  S\gets \left(S\setminus \left(A\cup
  \bigcup_{uv\in M \intedges A}\{\shadow{uv}{u},\shadow{uv}{v}\}
  \right)\right)
  \cup (M\intedges A)\ ;
\]
note that since we are removing shadow edges of the edges in $M \intedges 
A$, the number of shadow edges remains at most $n$.
The edges in $M\intedges A$ then become shadow edges associated with the 
vertices that it shares with the edges in $M$. For example, if $A=\{a_1g_1\}$, 
then $M\intedges A=\{g_1y_1, a_1c_1\}$, and after updating $M$ and $S$ as 
mentioned above, $\shadow{a_1g_1}{g_1} =g_1y_1$, and $\shadow{a_1g_1}{a_1} 
=a_1c_1$.  This completes the description of Zelke's algorithm.

\end{document}